\newtheorem{thm}{Theorem}
\newtheorem*{thm*}{Theorem}
\newtheorem{defn}[thm]{Definition}
\newtheorem{lem}[thm]{Lemma}
\newtheorem*{lem*}{Lemma}
\newenvironment{customthm}[1]
  {\innercustomthm}
  {\endinnercustomthm}
\newenvironment{customlem}[1]
  {\innercustomlem}
  {\endinnercustomlem}
\icmltitlerunning{Fast $k$-Nearest Neighbour Search via Dynamic Continuous Indexing}
\begin{document} 

\twocolumn[
\icmltitle{Fast $k$-Nearest Neighbour Search via Dynamic Continuous Indexing}

\icmlauthor{Ke Li}{ke.li@eecs.berkeley.edu}
\icmlauthor{Jitendra Malik}{malik@eecs.berkeley.edu}
\icmladdress{University of California, Berkeley, CA 94720, United States}

\icmlkeywords{nearest neighbour search, randomized algorithm, one-dimensional random projection, space partitioning, curse of dimensionality}

\vskip 0.3in
]

\begin{abstract}
Existing methods for retrieving $k$-nearest neighbours suffer from the curse of dimensionality. We argue this is caused in part by inherent deficiencies of space partitioning, which is the underlying strategy used by most existing methods. We devise a new strategy that avoids partitioning the vector space and present a novel randomized algorithm that runs in time linear in dimensionality of the space and sub-linear in the intrinsic dimensionality and the size of the dataset and takes space constant in dimensionality of the space and linear in the size of the dataset. The proposed algorithm allows fine-grained control over accuracy and speed on a per-query basis, automatically adapts to variations in data density, supports dynamic updates to the dataset and is easy-to-implement. We show appealing theoretical properties and demonstrate empirically that the proposed algorithm outperforms locality-sensitivity hashing (LSH) in terms of approximation quality, speed and space efficiency. 
\end{abstract} 

\section{Introduction}

The $k$-nearest neighbour method is commonly used both as a classifier and as subroutines in more complex algorithms in a wide range domains, including machine learning, computer vision, graphics and robotics. Consequently, finding a fast algorithm for retrieving nearest neighbours has been a subject of sustained interest among the artificial intelligence and the theoretical computer science communities alike. Work over the past several decades has produced a rich collection of algorithms; however, they suffer from one key shortcoming: as the ambient or intrinsic dimensionality\footnote{We use \emph{ambient dimensionality} to refer to the original dimensionality of the space containing the data in order to differentiate it from \emph{intrinsic dimensionality}, which measures density of the dataset and will be defined precisely later.} increases, the running time and/or space complexity grows rapidly; this phenomenon is often known as the curse of dimensionality. Finding a solution to this problem has proven to be elusive and has been conjectured to be fundamentally impossible \cite{minsky1969perceptrons}. In this era of rapid growth in both the volume and dimensionality of data, it has become increasingly important to devise a fast algorithm that scales to high dimensional space. 

We argue that the difficulty in overcoming the curse of dimensionality stems in part from inherent deficiencies in space partitioning, the strategy that underlies most existing algorithms. Space partitioning is a divide-and-conquer strategy that partitions the vector space into a finite number of cells and keeps track of the data points that each cell contains. At query time, exhaustive search is performed over all data points that lie in cells containing the query point. This strategy forms the basis of most existing algorithms, including $k$-d trees \cite{bentley1975multidimensional} and locality-sensitive hashing (LSH) \cite{indyk1998approximate}. 

While this strategy seems natural and sensible, it suffers from critical deficiencies as the dimensionality increases. Because the volume of any region in the vector space grows exponentially in the dimensionality, either the number or the size of cells must increase exponentially in the number of dimensions, which tends to lead to exponential time or space complexity in the dimensionality. In addition, space partitioning limits the algorithm's ``field of view" to the cell containing the query point; points that lie in adjacent cells have no hope of being retrieved. Consequently, if a query point falls near cell boundaries, the algorithm will fail to retrieve nearby points that lie in adjacent cells. Since the number of such cells is exponential in the dimensionality, it is intractable to search these cells when dimensionality is high. One popular approach used by LSH and spill trees \cite{liu2004investigation} to mitigate this effect is to partition the space using overlapping cells and search over all points that lie in any of the cells that contain the query point. Because the ratio of surface area to volume grows in dimensionality, the number of overlapping cells that must be used increases in dimensionality; as a result, the running time or space usage becomes prohibitively expensive as dimensionality increases. Further complications arise from variations in data density across different regions of the space. If the partitioning is too fine, most cells in sparse regions of the space will be empty and so for a query point that lies in a sparse region, no points will be retrieved. If the partitioning is too coarse, each cell in dense regions of the space will contain many points and so for a query point that lies in a dense region, many points that are not the true nearest neighbours must be searched. This phenomenon is notably exhibited by LSH, whose performance is highly sensitive to the choice of the hash function, which essentially defines an implicit partitioning. A good partitioning scheme must therefore depend on the data; however, such data-dependent partitioning schemes would require possibly expensive preprocessing and prohibit online updates to the dataset. These fundamental limitations of space partitioning beg an interesting question: is it possible to devise a strategy that does not partition the space, yet still enables fast retrieval of nearest neighbours?

In this paper, we present a new strategy for retrieving $k$-nearest neighbours that avoids discretizing the vector space, which we call dynamic continuous indexing (DCI). Instead of partitioning the space into discrete cells, we construct continuous indices, each of which imposes an ordering on data points such that closeness in position serves as an approximate indicator of proximity in the vector space. The resulting algorithm runs in time linear in ambient dimensionality and sub-linear in intrinsic dimensionality and the size of the dataset, while only requiring space constant in ambient dimensionality and linear in the size of the dataset. Unlike existing methods, the algorithm allows fine-grained control over accuracy and speed at query time and adapts to varying data density on-the-fly while permitting dynamic updates to the dataset. Furthermore, the algorithm is easy-to-implement and does not rely on any complex or specialized data structure. 

\section{Related Work}

Extensive work over the past several decades has produced a rich collection of algorithms for fast retrieval of $k$-nearest neighbours. Space partitioning forms the basis of the majority of these algorithms. Early approaches store points in deterministic tree-based data structures, such as $k$-d trees \cite{bentley1975multidimensional}, R-trees \cite{guttman1984r} and X-trees \cite{berchtold1996x,berchtold1998fast}, which effectively partition the vector space into a hierarchy of half-spaces, hyper-rectangles or Voronoi polygons. These methods achieve query times that are logarithmic in the number of data points and work very well for low-dimensional data. Unfortunately, their query times grow exponentially in ambient dimensionality because the number of leaves in the tree that need to be searched increases exponentially in ambient dimensionality; as a result, for high-dimensional data, these algorithms become slower than exhaustive search. More recent methods like spill trees~\cite{liu2004investigation}, RP trees~\cite{dasgupta2008random} and virtual spill trees~\cite{dasgupta2015randomized} extend these approaches by randomizing the dividing hyperplane at each node. Unfortunately, the number of points in the leaves increases exponentially in the intrinsic dimensionality of the dataset. 

In an effort to tame the curse of dimensionality, researchers have considered relaxing the problem to allow $\epsilon$-approximate solutions, which can contain any point whose distance to the query point differs from that of the true nearest neighbours by at most a factor of $(1+\epsilon)$. Tree-based methods \cite{arya1998optimal} have been proposed for this setting; unfortunately, the running time still exhibits exponential dependence on dimensionality. Another popular method is locality-sensitive hashing (LSH) \cite{indyk1998approximate,datar2004locality}, which relies on a hash function that implicitly defines a partitioning of the space. Unfortunately, LSH struggles on datasets with varying density, as cells in sparse regions of the space tend to be empty and cells in dense regions tend to contain a large number of points. As a result, it fails to return any point on some queries and requires a long time on some others. This motivated the development of data-dependent hashing schemes based on $k$-means \cite{pauleve2010locality} and spectral partitioning \cite{weiss2009spectral}. Unfortunately, these methods do not support dynamic updates to the dataset or provide correctness guarantees. Furthermore, they incur a significant pre-processing cost, which can be expensive on large datasets. Other data-dependent algorithms outside the LSH framework have also been proposed, such as \cite{fukunaga1975branch,brin1995near,nister2006scalable,wang2011fast}, which work by constructing a hierarchy of clusters using $k$-means and can be viewed as performing a highly data-dependent form of space partitioning. For these algorithms, no guarantees on approximation quality or running time are known. 

One class of methods~\cite{orchard1991fast,arya1993approximate,clarkson1999nearest,karger2002finding} that does not rely on space partitioning uses a local search strategy. Starting from a random data point, these methods iteratively find a new data point that is closer to the query than the previous data point. Unfortunately, the performance of these methods deteriorates in the presence of significant variations in data density, since it may take very long to navigate a dense region of the space, even if it is very far from the query. Other methods like navigating nets~\cite{krauthgamer2004navigating}, cover trees~\cite{beygelzimer2006cover} and rank cover trees~\cite{houle2015rank} adopt a coarse-to-fine strategy. These methods work by maintaining coarse subsets of points at varying scales and progressively searching the neighbourhood of the query with decreasing radii at increasingly finer scales. Sadly, the running times of these methods again exhibit exponential dependence on the intrinsic dimensionality. 

We direct interested readers to~\cite{clarkson2006nearest} for a comprehensive survey of existing methods. 

\section{Algorithm}

\begin{algorithm}
\footnotesize
\caption{Algorithm for data structure construction}
\label{alg_construct}
\begin{algorithmic}
\Require A dataset $D$ of $n$ points $p^{1},\ldots,p^{n}$, the number of simple indices $m$ that constitute a composite index and the number of composite indices $L$
\Function{Construct}{$D,m,L$}
    \State $\{u_{jl}\}_{j \in [m], l \in [L]} \gets mL$ random unit vectors in $\mathbb{R}^{d}$
    \State $\{T_{jl}\}_{j \in [m], l \in [L]} \gets mL$ empty binary search trees or skip 
    \State \qquad\qquad\qquad\qquad\, lists
    \For{$j = 1$ \textbf{to} $m$}
        \For{$l = 1$ \textbf{to} $L$}
            \For{$i = 1$ \textbf{to} $n$}
                \State $\overline{p}^{i}_{jl} \gets \langle p^{i},u_{jl}\rangle$
                \State Insert $(\overline{p}^{i}_{jl},i)$ into $T_{jl}$ with $\overline{p}^{i}_{jl}$ being the key and 
                \State \; $i$ being the value
            \EndFor
        \EndFor
    \EndFor
    \State \Return $\{(T_{jl},u_{jl})\}_{j \in [m], l \in [L]}$
\EndFunction
\end{algorithmic}
\normalsize
\end{algorithm}

The proposed algorithm relies on the construction of continuous indices of data points that support both fast searching and online updates. To this end, we use one-dimensional random projections as basic building blocks and construct $mL$ simple indices, each of which orders data points by their projections along a random direction. Such an index has both desired properties: data points can be efficiently retrieved and updated when the index is implemented using standard data structures for storing ordered sequences of scalars, like self-balancing binary search trees or skip lists. This ordered arrangement of data points exploits an important property of the $k$-nearest neighbour search problem that has often been overlooked: it suffices to construct an index that approximately preserves the relative \emph{order} between the true $k$-nearest neighbours and the other data points in terms of their distances to the query point without necessarily preserving all pairwise distances. This observation enables projection to a much lower dimensional space than the Johnson-Lindenstrauss transform \cite{johnson1984extensions}. We show in the following section that with high probability, one-dimensional random projection preserves the relative order of two points whose distances to the query point differ significantly regardless of the ambient dimensionality of the points. 

\begin{algorithm}
\footnotesize
\caption{Algorithm for $k$-nearest neighbour retrieval}
\label{alg_query}
\begin{algorithmic}
\Require Query point $q$ in $\mathbb{R}^{d}$, binary search trees/skip lists and their associated projection vectors $\{(T_{jl},u_{jl})\}_{j \in [m], l \in [L]}$, and maximum tolerable failure probability $\epsilon$
\Function{Query}{$q,\{(T_{jl},u_{jl})\}_{j,l},\epsilon$}
    \State $C_{l} \gets$ array of size $n$ with entries initialized to 0\; $\forall l \in [L]$
    \State $\overline{q}_{jl} \gets \langle q,u_{jl}\rangle\; \forall j \in [m], l \in [L]$
    \State $S_{l} \gets \emptyset \; \forall l \in [L]$
    \For{$i = 1$ \textbf{to} $n$}
        \For{$l = 1$ \textbf{to} $L$}
            \For{$j = 1$ \textbf{to} $m$}
                \State $(\overline{p}_{jl}^{(i)},h_{jl}^{(i)}) \gets $ the node in $T_{jl}$ whose key is the 
                \State \qquad\qquad\qquad $i^{\mathrm{th}}$ closest to $\overline{q}_{jl}$
                \State $C_{l}[h_{jl}^{(i)}] \gets C_{l}[h_{jl}^{(i)}] + 1$
            \EndFor
            \For{$j = 1$ \textbf{to} $m$}
                \If{$C_{l}[h_{jl}^{(i)}] = m$}
                    \State $S_{l} \gets S_{l} \cup \{h_{jl}^{(i)}\}$
                \EndIf
            \EndFor
        \EndFor
        \If{\textit{IsStoppingConditionSatisfied}($i,S_{l},\epsilon$)}
            \State \textbf{break}
        \EndIf
    \EndFor
    \State \Return $k$ points in $\bigcup_{l\in[L]}S_{l}$ that are the closest in 
    \State \qquad\;\;\; Euclidean distance in $\mathbb{R}^{d}$ to $q$
\EndFunction
\end{algorithmic}
\normalsize
\end{algorithm}

We combine each set of $m$ simple indices to form a composite index in which points are ordered by the maximum difference over all simple indices between the positions of the point and the query in the simple index. The composite index enables fast retrieval of a small number of data points, which will be referred to as candidate points, that are close to the query point along several random directions and therefore are likely to be truly close to the query. The composite indices are not explicitly constructed; instead, each of them simply keeps track of the number of its constituent simple indices that have encountered any particular point and returns a point as soon as all its constituent simple indices have encountered that point. 

At query time, we retrieve candidate points from each composite index one by one in the order they are returned until some stopping condition is satisfied, while omitting points that have been previously retrieved from other composite indices. Exhaustive search is then performed over candidate points retrieved from all $L$ composite indices to identify the subset of $k$ points closest to the query. Please refer to Algorithms \ref{alg_construct} and \ref{alg_query} for a precise statement of the construction and querying procedures. 

\begin{figure*}[t]
    \centering
    \subfloat[]{
        \includegraphics[width=0.33\textwidth]{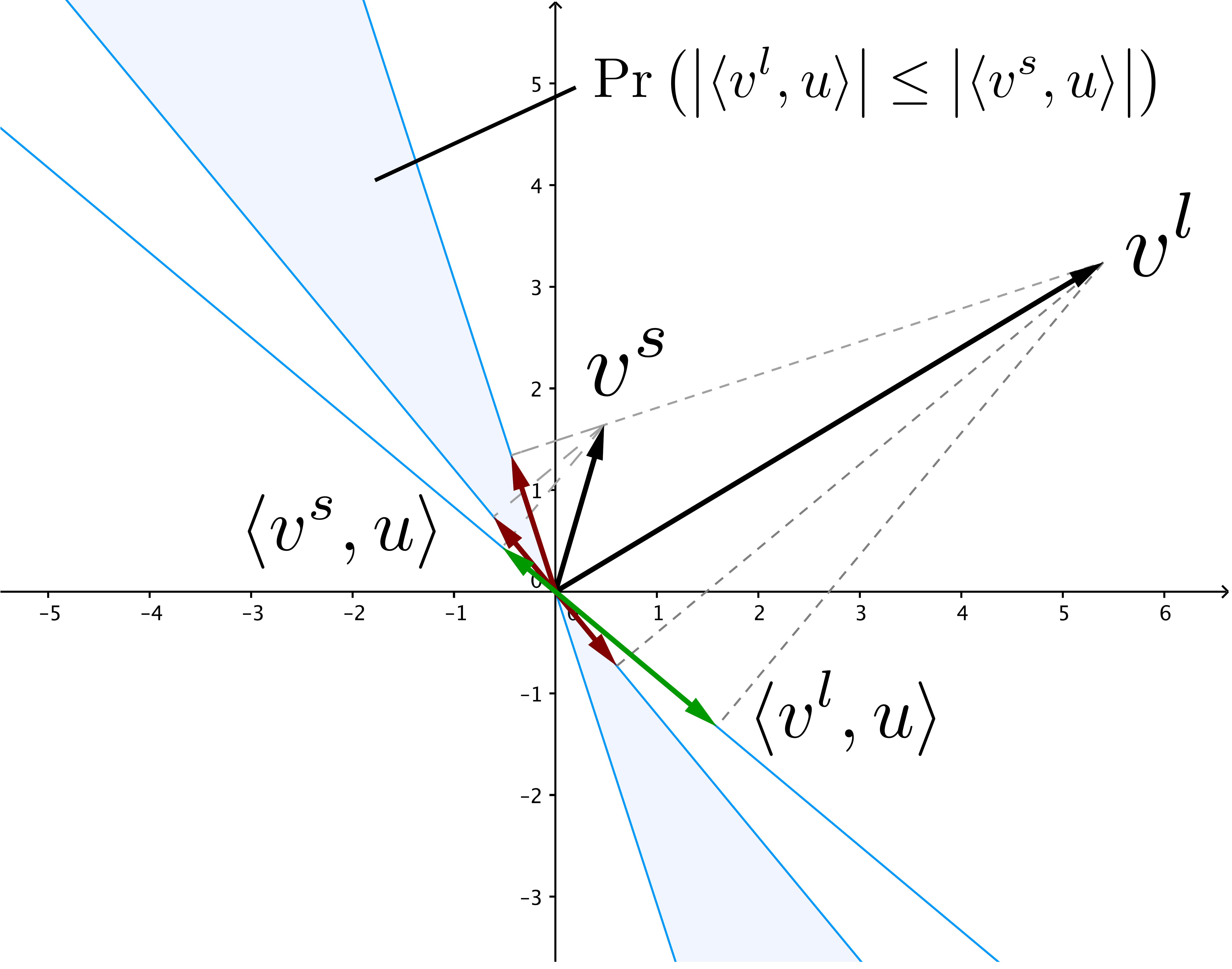}
        \label{fig:2d_proj}
    }
    \subfloat[]{
        \includegraphics[width=0.33\textwidth]{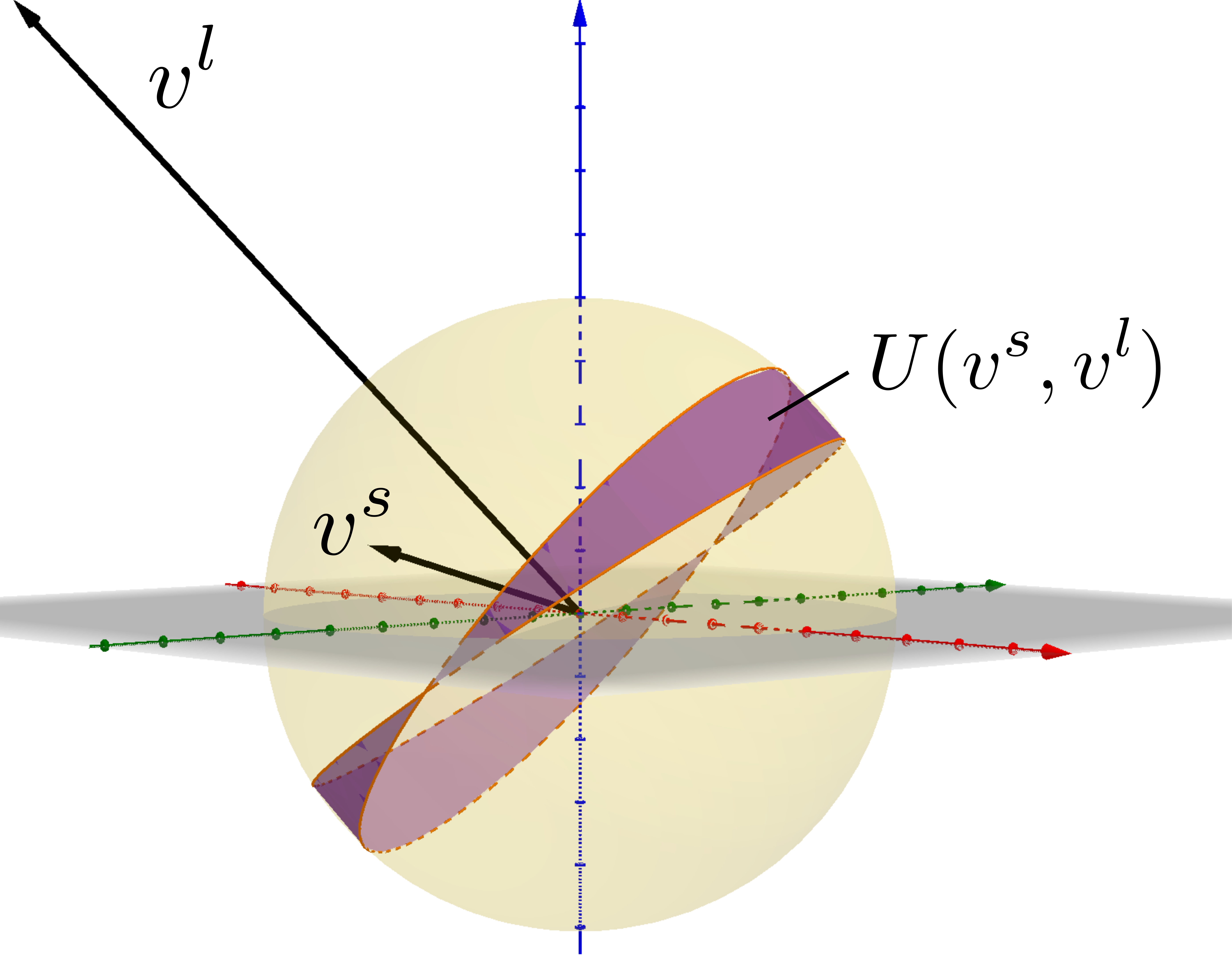}
        \label{fig:3d_proj_one_vec}
    }
    \subfloat[]{
        \includegraphics[width=0.33\textwidth]{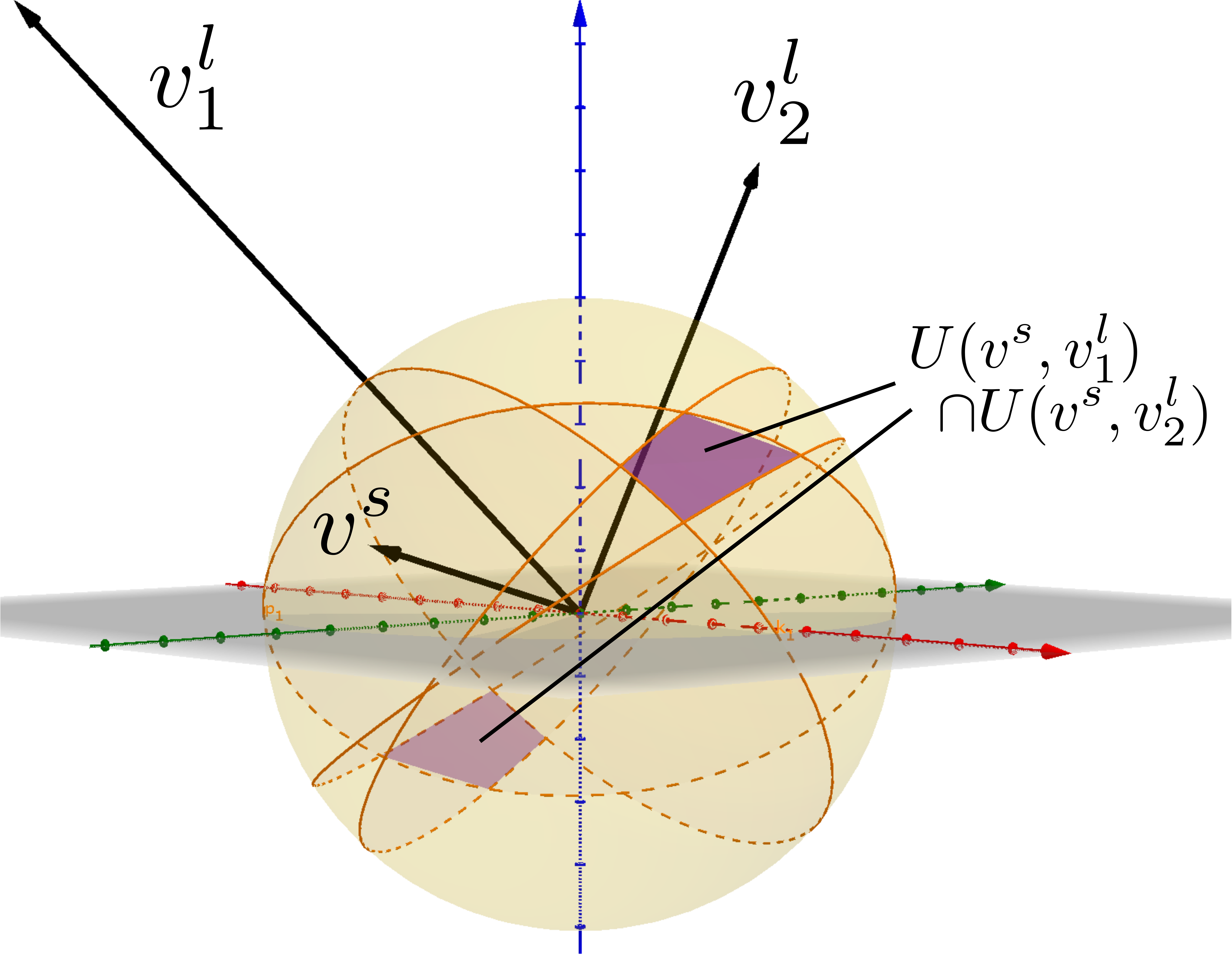}
        \label{fig:3d_proj_two_vec}
    }
    \caption{(a) Examples of order-preserving (shown in green) and order-inverting (shown in red) projection directions. Any projection direction within the shaded region inverts the relative order of the vectors by length under projection, while any projection directions outside the region preserves it. The size of the shaded region depends on the ratio of the lengths of the vectors. (b) Projection vectors whose endpoints lie in the shaded region would be order-inverting. (c) Projection vectors whose endpoints lie in the shaded region would invert the order of both long vectors relative to the short vector. Best viewed in colour.}
\end{figure*}

Because data points are retrieved according to their positions in the composite index rather than the regions of space they lie in, the algorithm is able to automatically adapt to changes in data density as dynamic updates are made to the dataset without requiring any pre-processing to estimate data density at construction time. Also, unlike existing methods, the number of retrieved candidate points can be controlled on a per-query basis, enabling the user to easily trade off accuracy against speed. We develop two versions of the algorithm, a data-independent and a data-dependent version, which differ in the stopping condition that is used. In the former, the number of candidate points is indirectly preset according to the global data density and the maximum tolerable failure probability; in the latter, the number of candidate points is chosen adaptively at query time based on the local data density in the neighbourhood of the query. We analyze the algorithm below and show that its failure probability is independent of ambient dimensionality, its running time is linear in ambient dimensionality and sub-linear in intrinsic dimensionality and the size of the dataset and its space complexity is independent of ambient dimensionality and linear in the size of the dataset. 

\subsection{Properties of Random 1D Projection}

First, we examine the effect of projecting $d$-dimensional vectors to one dimension, which motivates its use in the proposed algorithm. We are interested in the probability that a distant point appears closer than a nearby point under projection; if this probability is low, then each simple index approximately preserves the order of points by distance to the query point. If we consider displacement vectors between the query point and data points, this probability is then is equivalent to the probability of the lengths of these vectors inverting under projection. 

\begin{lem}
Let $v^{l},v^{s} \in \mathbb{R}^{d}$ such that $\left\Vert v^{l} \right\Vert _{2} > \left\Vert v^{s}\vphantom{v^{l}} \right\Vert _{2}$, and $u \in \mathbb{R}^{d}$ be a unit vector drawn uniformly at random. Then the probability of $v^{s}\vphantom{v^{l}}$ being at least as long as $v^{l}$ under projection $u$ is at most $1-\frac{2}{\pi}\cos^{-1}\left(\left\Vert v^{s}\vphantom{v^{l}}\right\Vert _{2} / \left\Vert v^{l}\right\Vert _{2}\right)$. 
\end{lem}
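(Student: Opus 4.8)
The event that $v^{s}$ appears at least as long as $v^{l}$ under projection is precisely $\lvert\langle v^{s},u\rangle\rvert \geq \lvert\langle v^{l},u\rangle\rvert$. The plan is to first collapse the $d$-dimensional problem onto the plane $P = \mathrm{span}(v^{s},v^{l})$, and then to overbound the inversion probability by the probability of a single clean necessary condition whose angular measure can be read off directly. It is this last step (passing to a necessary condition rather than computing the inversion region exactly) that turns an exact-but-messy expression into the stated upper bound.

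First I would reduce to the plane $P$. Writing $u = u_{\parallel} + u_{\perp}$ for the orthogonal decomposition with respect to $P$, we have $\langle v^{s},u\rangle = \langle v^{s},u_{\parallel}\rangle$ and $\langle v^{l},u\rangle = \langle v^{l},u_{\parallel}\rangle$ since $v^{s},v^{l} \in P$, so the inversion event depends on $u$ only through $u_{\parallel}$. Moreover it is invariant under positive rescaling of $u_{\parallel}$, hence depends only on the direction $w = u_{\parallel}/\lVert u_{\parallel}\rVert$. The key structural fact is that, by invariance of the uniform distribution on the sphere under the subgroup of rotations acting within $P$ and fixing $P^{\perp}$, this direction $w$ is distributed uniformly on the unit circle of $P$, independently of $\lVert u_{\parallel}\rVert$. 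I can therefore replace $u$ by a uniformly random unit vector $w$ in the plane and work entirely in two dimensions. The degenerate case where $v^{s},v^{l}$ are parallel is handled separately: there inversion would require $\lVert v^{s}\rVert \geq \lVert v^{l}\rVert$ and so has probability zero, consistent with the bound.

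Next, in the plane I would bound the inversion event by a necessary condition. By Cauchy--Schwarz, $\lvert\langle v^{s},w\rangle\rvert \leq \lVert v^{s}\rVert$ because $w$ is a unit vector, so whenever inversion occurs we must have $\lvert\langle v^{l},w\rangle\rvert \leq \lvert\langle v^{s},w\rangle\rvert \leq \lVert v^{s}\rVert$. Writing $\theta$ for the angle between $w$ and $v^{l}$, this reads $\lVert v^{l}\rVert\,\lvert\cos\theta\rvert \leq \lVert v^{s}\rVert$, i.e. $\lvert\cos\theta\rvert \leq \lVert v^{s}\rVert/\lVert v^{l}\rVert$. Since $w$ is uniform on the circle, $\theta$ is uniform, and the arclength of the set $\{\lvert\cos\theta\rvert \leq \lVert v^{s}\rVert/\lVert v^{l}\rVert\}$ is a routine computation giving probability exactly $1 - \tfrac{2}{\pi}\cos^{-1}\!\left(\lVert v^{s}\rVert/\lVert v^{l}\rVert\right)$. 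As the inversion event is contained in this set, its probability is at most this quantity, which is the claim.

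I expect the main obstacle to be the justification of the two-dimensional reduction, specifically that the in-plane direction $w$ is \emph{exactly} uniform on the circle and that this is what makes the final bound independent of the ambient dimension $d$. It is tempting to argue directly about the angle between $u$ and $v^{l}$ in $\mathbb{R}^{d}$, but the marginal of a single spherical coordinate in $d$ dimensions concentrates near the equator and would yield a $d$-dependent answer; only after projecting onto $P$ does the relevant angle become uniform and the clean $\cos^{-1}$ expression emerge. Some care is also needed to confirm that passing to a necessary condition still produces the stated value and not something weaker; a quick sanity check in the orthogonal configuration, where the exact probability is $1 - \tfrac{2}{\pi}\tan^{-1}\!\left(\lVert v^{l}\rVert/\lVert v^{s}\rVert\right)$, confirms that it indeed sits below the bound.
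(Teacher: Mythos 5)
Your proposal is correct and follows essentially the same route as the paper's proof: reduce to the plane $P=\mathrm{span}(v^{s},v^{l})$ where the in-plane angle $\theta$ of $u^{\parallel}$ is uniform, relax the inversion event to the necessary condition $\left|\cos\theta\right|\leq\left\Vert v^{s}\right\Vert _{2}/\left\Vert v^{l}\right\Vert _{2}$ (your Cauchy--Schwarz step is exactly the paper's bound $\left|\cos\psi\right|\leq1$), and read off the arc measure. Your treatment is in fact slightly more careful than the paper's, since you justify explicitly that the normalized in-plane component is uniform on the circle and handle the collinear case directly rather than by an auxiliary spanning vector.
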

\begin{proof}
Assuming that $v^{l}$ and $v^{s}$ are not collinear, consider the two-dimensional subspace spanned by $v^{l}$ and $v^{s}$, which we will denote as $P$. (If $v^{l}$ and $v^{s}$ are collinear, we define $P$ to be the subspace spanned by $v^{l}$ and an arbitrary vector that's linearly independent of $v^{l}$.) For any vector $w$, we use $w^{\parallel}$ and $w^{\perp}$ to denote the components of $w$ in $P$ and $P^{\perp}$ such that $w=w^{\parallel}+w^{\perp}$. For $w \in \{v^{s}\vphantom{v^{l}},v^{l}\}$, because $w^{\perp}=0$, $\langle w, u \rangle = \langle w, u^{\parallel} \rangle$. So, we can limit our attention to $P$ for this analysis. We parameterize $u^{\parallel}$ in terms of its angle relative to $v^{l}$, which we denote as $\theta$. Also, we denote the angle of $u^{\parallel}$ relative to $v^{s}$ as $\psi$. Then,

\vspace{-10pt}
\footnotesize
\begin{align*}
 &\mathrm{Pr}\left(\left|\langle v^{l},u\rangle\right| \leq \left|\langle v^{s}\vphantom{v^{l}},u\rangle\right|\right) \\
 = \;& \mathrm{Pr}\left(\left|\langle v^{l},u^{\parallel}\rangle\right| \leq \left|\langle v^{s}\vphantom{v^{l}},u^{\parallel}\rangle\right|\right) \\
 = \;& \mathrm{Pr}\left(\left\Vert v^{l} \right\Vert _{2} \left\Vert u^{\parallel} \right\Vert _{2} \left|\cos\theta\right| \leq \left\Vert v^{s}\vphantom{v^{l}} \right\Vert _{2} \left\Vert u^{\parallel}\right\Vert _{2}\left|\cos\psi\right|\right) \\
 \leq \;& \mathrm{Pr}\left(\left|\cos\theta\right| \leq \frac{\left\Vert v^{s} \right\Vert _{2}}{\left\Vert v^{l} \right\Vert _{2}}\right) \\
 = \;& 2\mathrm{Pr}\left(\theta\in\left[\cos^{-1}\left(\frac{\left\Vert v^{s} \right\Vert _{2}}{\left\Vert v^{l} \right\Vert _{2}}\right),\pi-\cos^{-1}\left(\frac{\left\Vert v^{s} \right\Vert _{2}}{\left\Vert v^{l} \right\Vert _{2}}\right)\right]\right) \\
 = \;& 1- \frac{2}{\pi}\cos^{-1}\left(\frac{\left\Vert v^{s} \right\Vert _{2}}{\left\Vert v^{l} \right\Vert _{2}}\right)
\end{align*}
\normalsize
\end{proof}

Observe that if $\left|\langle v^{l},u\rangle\right|\leq\left|\langle v^{s}\vphantom{v^{l}},u\rangle\right|$, the relative order of $v^{l}$ and $v^{s}$ by their lengths would be inverted when projected along $u$. This occurs when $u^{\parallel}$ is close to orthogonal to $v^{l}$, which is illustrated in Figure \ref{fig:2d_proj}. Also note that the probability of inverting the relative order of $v^{l}$ and $v^{s}$ is small when $v^{l}$ is much longer than $v^{s}$. On the other hand, this probability is high when $v^{l}$ and $v^{s}$ are similar in length, which corresponds to the case when two data points are almost equidistant to the query point. So, if we consider a sequence of vectors ordered by length, applying random one-dimensional projection will likely perturb the ordering locally, but will preserve the ordering globally. 

Next, we build on this result to analyze the order-inversion probability when there are more than two vectors. Consider the sample space $B = \left\{ \left. u \in \mathbb{R}^{d} \right| \left\Vert u \right\Vert _{2} = 1 \right\}$ and the set $U(v^{s},v^{l})=\left\{ u\in B\left| \; \left|\cos\theta\right|\leq\left\Vert v^{s}\vphantom{v^{l}}\right\Vert _{2}/\left\Vert v^{l}\right\Vert _{2}\right.\right\}$, which is illustrated in Figure \ref{fig:3d_proj_one_vec}, where $\theta$ is the angle between $u^{\parallel}$ and $v^{l}$. If we use $\mbox{area}(U)$ to denote the area of the region formed by the endpoints of all vectors in the set $U$, then we can rewrite the above bound on the order-inversion probability as:

\vspace{-10pt}
\footnotesize
\begin{align*}
\mathrm{Pr}\left(\left|\langle v^{l},u\rangle\right|\leq\left|\langle v^{s}\vphantom{v^{l}},u\rangle\right|\right) &\leq \mathrm{Pr}\left(u \in U(v^{s},v^{l}) \right) \\
&= \frac{\mbox{area}(V_{l}^{s})}{\mbox{area}(B)} \\
&= 1-\frac{2}{\pi}\cos^{-1}\left(\frac{\left\Vert v^{s} \right\Vert _{2}}{\left\Vert v^{l} \right\Vert _{2}}\right)
\end{align*}
\normalsize

\begin{lem}
\label{lem:single_order_under_proj}
Let $\left\{ v^{l}_{i}\right\} _{i=1}^{N}$ be a set of vectors such that $\left\Vert v^{l}_{i}\right\Vert _{2}>\left\Vert v^{s}\vphantom{v^{l}} \right\Vert _{2} \; \forall i\in[N]$. Then the probability that there is a subset of $k'$ vectors from $\left\{ v^{l}_{i}\right\} _{i=1}^{N}$ that are all not longer than $v^{s}$ under projection is at most $\frac{1}{k'}\sum_{i=1}^{N}\left(1-\frac{2}{\pi}\cos^{-1}\left(\left\Vert v^{s}\vphantom{v^{l}} \right\Vert _{2} / \left\Vert v^{l}_{i}\right\Vert _{2} \right)\right)$. 
Furthermore, if $k' = N$, this probability is at most $\min_{i\in[N]}\left\{ 1-\frac{2}{\pi}\cos^{-1}\left(\left\Vert v^{s}\vphantom{v^{l}}\right\Vert _{2} / \left\Vert v^{l}_{i}\right\Vert _{2} \right) \right\}$. 
\end{lem}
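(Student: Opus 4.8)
The plan is to reduce the claim to the two-vector bound of Lemma~1 via a counting argument combined with Markov's inequality, so that no independence among the vectors is needed. First I would fix the random unit vector $u$ and, for each $i\in[N]$, introduce the indicator of the ``bad'' event that $v^l_i$ fails to be longer than $v^s$ under projection:
\[
A_i = \left\{ \left|\langle v^l_i,u\rangle\right| \le \left|\langle v^s,u\rangle\right| \right\}.
\]
Applying Lemma~1 to the pair $(v^s,v^l_i)$ gives $\Pr(A_i)\le p_i := 1-\tfrac{2}{\pi}\cos^{-1}(\|v^s\|_2/\|v^l_i\|_2)$ for every $i$.

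The key observation is that the event in the statement---that some subset of $k'$ of the $v^l_i$ are simultaneously not longer than $v^s$ under projection---is precisely the event that at least $k'$ of the indicators $A_i$ occur. Writing $X=\sum_{i=1}^N \mathbf{1}[A_i]$ for the count of bad vectors, this target event is exactly $\{X\ge k'\}$. Since $X\ge 0$, Markov's inequality yields $\Pr(X\ge k')\le \mathbb{E}[X]/k'$, and linearity of expectation gives $\mathbb{E}[X]=\sum_{i=1}^N \Pr(A_i)\le \sum_{i=1}^N p_i$, which together produce the claimed bound $\tfrac{1}{k'}\sum_{i=1}^N p_i$. I would stress that both linearity of expectation and Markov's inequality hold with no assumption on the joint law of the $A_i$; this is essential, because the events all share the single random direction $u$ and are therefore highly dependent.

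For the sharper bound when $k'=N$, I would note that $\{X\ge N\}$ forces every indicator to fire, so the event collapses to the intersection $\bigcap_{i=1}^N A_i$. Monotonicity of probability then gives $\Pr(\bigcap_{i=1}^N A_i)\le \min_{i\in[N]} \Pr(A_i)\le \min_{i\in[N]} p_i$, which is exactly the stated minimum.

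I do not expect a serious obstacle: the whole argument rests on recognizing the target as $\{X\ge k'\}$ and invoking Markov. The only point demanding care is to keep the bound dependence-agnostic---one must resist the temptation to multiply the $p_i$, which would silently assume independence of the $A_i$ and is unjustified here. The Markov route sidesteps this cleanly, and it also explains transparently why the $k'=N$ case degenerates to a single-term minimum rather than a product.
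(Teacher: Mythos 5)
Your proposal is correct and is essentially the paper's own argument in probabilistic dress: the paper's covering step --- every point of $\bigcup_{I\subseteq[N]:|I|=k'}\bigcap_{i\in I}U(v^{s},v^{l}_{i})$ is covered by at least $k'$ of the sets $U(v^{s},v^{l}_{i})$, so $k'$ times the area of the union is at most the sum of the areas --- is exactly your application of Markov's inequality to the count $X=\sum_{i=1}^{N}\mathbf{1}[A_{i}]$, with normalized area on the unit sphere playing the role of probability. The $k'=N$ case is likewise identical in both proofs (the intersection is bounded by the smallest term), so aside from phrasing there is no substantive difference.
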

\begin{proof}
For a given subset $I \subseteq [N]$ of size $k'$, the probability that all vectors indexed by elements in $I$ are not longer than $v^{s}$ under projection $u$ is at most $\mathrm{Pr}\left(u \in \bigcap_{i\in I}U(v^{s},v^{l}_{i})\right)=\mbox{area}\left(\bigcap_{i\in I} U(v^{s},v^{l}_{i})\right) / \mbox{area}\left(B\right)$. So, the probability that this occurs on some subset $I$ is at most $\mathrm{Pr}\left(u \in \bigcup_{I \subseteq [N]:|I|=k'}\bigcap_{i\in I}U(v^{s},v^{l}_{i})\right) = \mbox{area}\left(\bigcup_{I\subseteq[N]:|I|=k'}\bigcap_{i\in I}U(v^{s},v^{l}_{i})\right)/\mbox{area}\left(B\right)$.

Observe that each point in $\bigcup_{I\subseteq[N]:|I|=k'}\bigcap_{i\in I}U(v^{s},v^{l}_{i})$ must be covered by at least $k'$ $U(v^{s},v^{l}_{i})$'s. So,
\footnotesize
\[
k' \cdot \mbox{area}\left(\bigcup_{I\subseteq[N]:|I|=k'}\bigcap_{i\in I} U(v^{s},v^{l}_{i}) \right) \leq \sum_{i=1}^{N}\mbox{area}\left(U(v^{s},v^{l}_{i})\right)
\]
\normalsize

It follows that the probability this event occurs on some subset $I$ is bounded above by $\frac{1}{k'}\sum_{i=1}^{N}\frac{\mbox{area}\left(U(v^{s},v^{l}_{i})\right)}{\mbox{area}(B)} = \frac{1}{k'}\sum_{i=1}^{N}\left(1-\frac{2}{\pi}\cos^{-1}\left(\left\Vert v^{s}\vphantom{v^{l}} \right\Vert _{2} / \left\Vert v^{l}_{i} \right\Vert _{2}\right)\right)$. 

If $k' = N$, we use the fact that $\mbox{area}\left(\bigcap_{i\in [N]} U(v^{s},v^{l}_{i}) \right) \leq \min_{i\in[N]}\left\{ \mbox{area}\left( U(v^{s},v^{l}_{i}) \right) \right\}$ to obtain the desired result. 
\end{proof}

Intuitively, if this event occurs, then there are at least $k'$ vectors that rank above $v^{s}$ when sorted in nondecreasing order by their lengths under projection. This can only occur when the endpoint of $u$ falls in a region on the unit sphere corresponding to $\bigcup_{I \subseteq [N]:|I|=k'}\bigcap_{i\in I} U(v^{s},v^{l}_{i})$. We illustrate this region in Figure \ref{fig:3d_proj_two_vec} for the case of $d = 3$. 

\begin{thm}
\label{thm:multi_order_under_proj}
Let $\left\{ v^{l}_{i}\right\} _{i=1}^{N}$ and $\left\{ v^{s}_{i'} \vphantom{v^{l}_{i}} \right\} _{i'=1}^{N'}$ be sets of vectors such that $\left\Vert v^{l}_{i}\right\Vert _{2} > \left\Vert v^{s}_{i'}\right\Vert _{2}\forall i\in[N],i'\in[N']$.
Then the probability that there is a subset of $k'$ vectors from $\left\{ v^{l}_{i}\right\} _{i=1}^{N}$ that are all not longer than some $v^{s}_{i'}$ under projection is at most $\frac{1}{k'}\sum_{i=1}^{N}\left(1-\frac{2}{\pi}\cos^{-1}\left(\left\Vert v^{s}_{\mathrm{max}}\vphantom{v^{l}}\right\Vert _{2} / \left\Vert v^{l}_{i}\right\Vert _{2}\right)\right)$, where $\left\Vert v^{s}_{\mathrm{max}}\vphantom{v^{l}}\right\Vert _{2}\geq\left\Vert v^{s}_{i'}\right\Vert _{2}\forall i'\in[N']$. 
\end{thm}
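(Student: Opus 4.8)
The plan is to reduce this multi-short-vector statement to the single-short-vector setting already handled in Lemma~\ref{lem:single_order_under_proj}, using the single longest short vector $v^s_{\max}$ as a reference. The guiding observation is that, under projection by a unit vector $u$, the projected length of any short vector is controlled by its norm: $|\langle v^s_{i'}, u\rangle| \leq \|v^s_{i'}\|_2 \leq \|v^s_{\max}\|_2$ by Cauchy--Schwarz. So whenever a long vector $v^l_i$ is not longer than \emph{some} $v^s_{i'}$ under projection, its projected length is forced to be at most $\|v^s_{\max}\|_2$. I would try to package this into a set containment on the unit sphere and then re-use, essentially verbatim, the covering argument from Lemma~\ref{lem:single_order_under_proj}.

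Concretely, I would first define, for each long vector $v^l_i$, the event $W_i = \{u \in B : |\langle v^l_i, u\rangle| \leq |\langle v^s_{i'}, u\rangle| \text{ for some } i' \in [N']\}$, so that the event in the theorem is exactly $\bigcup_{I \subseteq [N]:\,|I|=k'} \bigcap_{i \in I} W_i$. The key step is to show the containment $W_i \subseteq U(v^s_{\max}, v^l_i)$, i.e.\ that inverting $v^l_i$ against the \emph{best} short vector still forces $u$ into the region whose area fraction is $1 - \tfrac{2}{\pi}\cos^{-1}(\|v^s_{\max}\|_2/\|v^l_i\|_2)$. Granting this containment, the argument of Lemma~\ref{lem:single_order_under_proj} transfers unchanged: every point of $\bigcup_{|I|=k'}\bigcap_{i\in I}U(v^s_{\max},v^l_i)$ is covered by at least $k'$ of the regions $U(v^s_{\max}, v^l_i)$, so $k'\cdot\mathrm{area}(\bigcup_{|I|=k'}\bigcap_{i\in I}U(v^s_{\max},v^l_i)) \leq \sum_{i=1}^N \mathrm{area}(U(v^s_{\max}, v^l_i))$, and dividing through by $\mathrm{area}(B)$ yields the claimed bound $\frac{1}{k'}\sum_{i=1}^N(1-\tfrac{2}{\pi}\cos^{-1}(\|v^s_{\max}\|_2/\|v^l_i\|_2))$.

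I expect the containment $W_i \subseteq U(v^s_{\max}, v^l_i)$ to be the main obstacle, and the step needing the most care. The difficulty is that $U(v^s, v^l_i)$ is defined through the in-plane angle $\theta$ between $v^l_i$ and the projection $u^\parallel$ of $u$ onto the two-dimensional plane spanned by $v^s$ and $v^l_i$; this plane, and hence the region $U$, depends on the \emph{direction} of the short vector, not only on its norm. Since distinct short vectors $v^s_{i'}$ span distinct planes with $v^l_i$, the per-pair containments $\{|\langle v^l_i,u\rangle| \leq |\langle v^s_{i'},u\rangle|\} \subseteq U(v^s_{i'}, v^l_i)$ furnished by the two-vector order-inversion bound (the first lemma) live in different planes, and it is not automatic that their union over $i'$ collapses into the single reference region $U(v^s_{\max}, v^l_i)$.

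I would therefore concentrate on establishing that passing to the scalar bound $\|v^s_{\max}\|_2$ is compatible with the in-plane-angle characterization of $U(v^s_{\max}, v^l_i)$ --- for instance, by attempting to show directly that $|\langle v^l_i, u\rangle| \leq \|v^s_{\max}\|_2$ implies $|\cos\theta| \leq \|v^s_{\max}\|_2/\|v^l_i\|_2$ in the reference plane, and by checking whether the factor $\|u^\parallel\|_2$ that appears when one expands $|\langle v^l_i, u\rangle| = \|v^l_i\|_2\,\|u^\parallel\|_2\,|\cos\theta|$ can be controlled uniformly across the differing short-vector planes. Verifying (or, if necessary, repairing) this reconciliation between the per-short-vector planes and the single reference plane is exactly where the substantive work lies; once it is in place, the remainder is a direct transcription of the covering argument in Lemma~\ref{lem:single_order_under_proj} together with the monotonicity $\cos^{-1}(\|v^s_{i'}\|_2/\|v^l_i\|_2) \geq \cos^{-1}(\|v^s_{\max}\|_2/\|v^l_i\|_2)$.
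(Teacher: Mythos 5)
The gap in your proposal is the one you flagged yourself: the containment $W_i \subseteq U(v^{s}_{\mathrm{max}}, v^{l}_{i})$ is never established, and your suspicion about it is well founded --- the containment is in fact \emph{false}, and neither of the repairs you sketch can close it. Membership in $U(v^{s}_{\mathrm{max}},v^{l}_{i})$ means $|\langle v^{l}_{i},u\rangle| \leq \|v^{s}_{\mathrm{max}}\|_{2}\,\|u^{\parallel}\|_{2}$, where $u^{\parallel}$ is the component of $u$ in the reference plane $\mathrm{span}(v^{s}_{\mathrm{max}},v^{l}_{i})$; Cauchy--Schwarz only yields $|\langle v^{l}_{i},u\rangle|\leq \|v^{s}_{\mathrm{max}}\|_{2}$, which is strictly weaker because $\|u^{\parallel}\|_{2}$ multiplies the bound on the right and can be arbitrarily small, so it cannot be ``controlled uniformly'' away. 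Concretely, in $\mathbb{R}^{3}$ take $v^{l}_{i}=10e_{1}$, $v^{s}_{i'}=e_{2}$, $v^{s}_{\mathrm{max}}=1.5\,e_{3}$ and $u=(0.05,\sqrt{1-0.0025},0)$: in $\mathrm{span}(e_{1},e_{2})$ one has $|\cos\theta|=0.05\leq 0.1$, so $u\in U(v^{s}_{i'},v^{l}_{i})$, but in $\mathrm{span}(e_{1},e_{3})$ the projection $u^{\parallel}=(0.05,0,0)$ is \emph{parallel} to $v^{l}_{i}$, so $|\cos\theta|=1>0.15$ and $u\notin U(v^{s}_{\mathrm{max}},v^{l}_{i})$.

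You should also know that the paper's own proof makes exactly the leap you were worried about: it deduces $U(v^{s}_{i'},v^{l}_{i})\subseteq U(v^{s}_{\mathrm{max}},v^{l}_{i})$ from the inclusion of the angle sets $\{\theta:|\cos\theta|\leq\cdot\}$, silently identifying angles $\theta$ that live in different planes; so your diagnosis pinpoints a real flaw in the published argument, not a defect peculiar to your plan. Indeed the stated constant can be beaten outright: with $v^{l}_{1}=\rho e_{1}$, $v^{s}_{1}=e_{2}$, $v^{s}_{2}=e_{3}$ and $N=k'=1$, writing $u=(t,\sqrt{1-t^{2}}\cos\phi,\sqrt{1-t^{2}}\sin\phi)$ with $t$ uniform on $[-1,1]$ independent of $\phi$, the event probability is $\mathbb{E}_{\phi}\bigl[M/\sqrt{\rho^{2}+M^{2}}\bigr]$ with $M=\max(|\cos\phi|,|\sin\phi|)$, which is about $0.90/\rho$ for large $\rho$ and exceeds the claimed $1-\frac{2}{\pi}\cos^{-1}(1/\rho)\approx 0.64/\rho$. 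A correct route is to run your covering argument on $\tilde{U}_{i}=\bigcup_{i'}U(v^{s}_{i'},v^{l}_{i})$ and union-bound $\mathrm{Pr}\bigl(u\in\tilde{U}_{i}\bigr)\leq\sum_{i'}\bigl(1-\frac{2}{\pi}\cos^{-1}\bigl(\|v^{s}_{i'}\|_{2}/\|v^{l}_{i}\|_{2}\bigr)\bigr)$, which proves Theorem \ref{thm:multi_order_under_proj} with an extra factor of $N'$; some dependence on $N'$ is unavoidable, since taking many orthonormal short vectors in high dimension makes $\max_{i'}|\langle v^{s}_{i'},u\rangle|$ concentrate around $\sqrt{2\log N'}$ times a single projection, so $\mathrm{Pr}(W_{i})$ genuinely grows with $N'$. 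In the paper's application $N'=k$, so the downstream bounds inflate by a factor of $k$ but keep their structure. In short: your route is the paper's route, the step you isolated is exactly where both arguments break, and as submitted the proposal is not a proof because the containment it rests on fails.
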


\begin{proof}
The probability that this event occurs is at most $\mathrm{Pr}\left(u\in\bigcup_{i'\in[N']}\bigcup_{I\subseteq[N]:|I|=k'}\bigcap_{i\in I} U(v^{s}_{i'},v^{l}_{i})\right)$. We observe that for all $i,i'$, $\left\{ \theta \left| \left|\cos\theta\right| \leq \left\Vert v^{s}_{i'} \right\Vert _{2} / \left\Vert v^{l}_{i} \right\Vert _{2} \right.\right\} \subseteq \left\{ \theta \left| \left|\cos\theta\right| \leq \left\Vert v^{s}_{\mathrm{max}}\vphantom{v^{l}}\right\Vert _{2} / \left\Vert v^{l}_{i} \right\Vert _{2} \right.\right\}$, which implies that $U(v^{s}_{i'},v^{l}_{i}) \subseteq U(v^{s}_{\mathrm{max}},v^{l}_{i})$. 

If we take the intersection followed by union on both sides, we obtain $\bigcup_{I\subseteq[N]:|I|=k'}\bigcap_{i\in I} U(v^{s}_{i'},v^{l}_{i}) \subseteq \bigcup_{I\subseteq[N]:|I|=k'}\bigcap_{i\in I} U(v^{s}_{\mathrm{max}},v^{l}_{i})$. Because this is true for all $i'$, $\bigcup_{i'\in[N']}\bigcup_{I\subseteq[N]:|I|=k'}\bigcap_{i\in I} U(v^{s}_{i'},v^{l}_{i}) \subseteq\bigcup_{I\subseteq[N]:|I|=k'}\bigcap_{i\in I} U(v^{s}_{\mathrm{max}},v^{l}_{i})$. 

Therefore, this probability is bounded above by $\mathrm{Pr}\left(u\in\bigcup_{I\subseteq[N]:|I|=k'}\bigcap_{i\in I} U(v^{s}_{\mathrm{max}},v^{l}_{i}) \right)$. By Lemma \ref{lem:single_order_under_proj}, this is at most $\frac{1}{k'}\sum_{i=1}^{N}\left(1-\frac{2}{\pi}\cos^{-1}\left(\left\Vert v^{s}_{\mathrm{max}}\vphantom{v^{l}}\right\Vert _{2} / \left\Vert v^{l}_{i}\right\Vert _{2}\right)\right)$. 

\end{proof}

\subsection{Data Density}

We now formally characterize data density by defining the following notion of local relative sparsity:

\begin{defn}
Given a dataset $D\subseteq\mathbb{R}^{d}$, let $B_{p}(r)$ be the set of points in $D$ that are within a ball of radius $r$ around a point $p$. We say $D$ has local relative sparsity of $(\tau,\gamma)$ at a point $p \in \mathbb{R}^{d}$ if for all $r$ such that $\left|B_{p}(r)\right|\geq\tau$, $\left|B_{p}(\gamma r)\right|\leq2\left|B_{p}(r)\right|$, where $\gamma \geq 1$. 
\end{defn}
Intuitively, $\gamma$ represents a lower bound on the increase in radius when the number of points within the ball is doubled. When $\gamma$ is close to $1$, the dataset is dense in the neighbourhood of $p$, since there could be many points in $D$ that are almost equidistant from $p$. Retrieving the nearest neighbours of such a $p$ is considered ``hard'', since it would be difficult to tell which of these points are the true nearest neighbours without computing the distances to all these points exactly. 

We also define a related notion of global relative sparsity, which we will use to derive the number of iterations the outer loop of the querying function should be executed and a bound on the running time that is independent of the query:

\begin{defn}
A dataset $D$ has global relative sparsity of $(\tau,\gamma)$ if for all $r$ and $p \in \mathbb{R}^{d}$ such that $\left|B_{p}(r)\right|\geq\tau$, $\left|B_{p}(\gamma r)\right|\leq2\left|B_{p}(r)\right|$, where $\gamma \geq 1$. 
\end{defn}

Note that a dataset with global relative sparsity of $(\tau,\gamma)$ has local relative sparsity of $(\tau,\gamma)$ at every point. Global relative sparsity is closely related to the notion of \emph{expansion rate} introduced by \cite{karger2002finding}. More specifically, a dataset with global relative sparsity of $(\tau,\gamma)$ has $(\tau,2^{(1/\log_{2}\gamma)})$-expansion, where the latter quantity is known as the expansion rate. If we use $c$ to denote the expansion rate, the quantity $\log_{2}c$ is known as the \emph{intrinsic dimension}~\cite{clarkson2006nearest}, since when the dataset is uniformly distributed, intrinsic dimensionality would match ambient dimensionality. So, the intrinsic dimension of a dataset with global relative sparsity of $(\tau,\gamma)$ is $1/\log_{2}\gamma$. 

\subsection{Data-Independent Version}

In the data-independent version of the algorithm, the outer loop in the querying function executes for a preset number of iterations $\tilde{k}$. The values of $L$, $m$ and $\tilde{k}$ are fixed for all queries and will be chosen later. 

We apply the results obtained above to analyze the algorithm. Consider the event that the algorithm fails to return the correct set of $k$-nearest neighbours -- this can only occur if a true $k$-nearest neighbour is not contained in any of the $S_{l}$'s, which entails that for each $l \in [L]$, there is a set of $\tilde{k} - k + 1$ points that are not the true $k$-nearest neighbours but are closer to the query than the true $k$-nearest neighbours under some of the projections $u_{1l}, \ldots, u_{ml}$. We analyze the probability that this occurs below and derive the parameter settings that ensure the algorithm succeeds with high probability. Please refer to the supplementary material for proofs of the following results. 

\begin{lem}
\label{lem:single_proj_failure_prob}
For a dataset with global relative sparsity $(k,\gamma)$, there is some $\tilde{k} \in \Omega(\max(k\log (n/k),k(n/k)^{1-\log_{2}\gamma}))$ such that the probability that the candidate points retrieved from a given composite index do not include some of the true $k$-nearest neighbours is at most some constant $\alpha < 1$.
\end{lem}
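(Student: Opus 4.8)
The plan is to bound the failure probability of a single composite index by a union bound over its $m$ constituent simple indices, reduce each simple-index failure to the order-inversion event analyzed in Theorem~\ref{thm:multi_order_under_proj}, and then control the resulting sum using the doubling property of global relative sparsity.

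First I would pin down how a composite index misses a true $k$-nearest neighbour. A true $k$-NN $p$ is returned by composite index $l$ only when all $m$ simple indices have encountered $p$ within the first $\tilde{k}$ positions; hence $p$ is missed precisely when at least one simple index $u_{jl}$ fails to rank $p$ among its $\tilde{k}$ closest points to $q$. For a fixed simple index to exclude $p$, there must be at least $\tilde{k}$ points projecting at least as close to $q$ as $p$; since at most $k-1$ of these can be other true $k$-NN, at least $\tilde{k}-k+1$ of them must be non-$k$-NN points whose displacement vectors are longer than $p-q$ yet project no longer than it. This is exactly the event of Theorem~\ref{thm:multi_order_under_proj} with $k' = \tilde{k}-k+1$, taking $\{v^{l}_{i}\}$ to be the $n-k$ non-$k$-NN displacement vectors, $\{v^{s}_{i'}\}$ the $k$ true $k$-NN displacement vectors, and $\left\Vert v^{s}_{\mathrm{max}}\right\Vert_{2} = r_{0}$ the distance from $q$ to its $k$-th nearest neighbour. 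Applying the theorem and then a union bound over the $m$ simple indices bounds the composite-index failure probability by $\frac{m}{\tilde{k}-k+1}\sum_{i=1}^{n-k}\left(1-\frac{2}{\pi}\cos^{-1}\left(r_{0}/\left\Vert v^{l}_{i}\right\Vert_{2}\right)\right)$.

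Next I would simplify the sum. Using $\cos^{-1}(x)=\pi/2-\sin^{-1}(x)$ together with the convexity bound $\sin^{-1}(x)\le \frac{\pi}{2}x$ on $[0,1]$, each summand satisfies $1-\frac{2}{\pi}\cos^{-1}(r_{0}/\left\Vert v^{l}_{i}\right\Vert_{2})\le r_{0}/\left\Vert v^{l}_{i}\right\Vert_{2}$, so it suffices to bound $\sum_{i} r_{0}/\left\Vert v^{l}_{i}\right\Vert_{2}$. Ordering all points by distance to $q$ and invoking global relative sparsity $(k,\gamma)$, the doubling property gives $|B_{q}(\gamma^{t} r_{0})|\le 2^{t} k$, so any point of rank exceeding $2^{t}k$ lies at distance at least $\gamma^{t} r_{0}$ from $q$. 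Grouping the non-$k$-NN points into geometric shells indexed by $t$, each shell contributes at most $2^{t}k$ points each with ratio at most $\gamma^{-t}$, yielding $\sum_{i} r_{0}/\left\Vert v^{l}_{i}\right\Vert_{2} \le k\sum_{t=0}^{T}(2/\gamma)^{t}$ with $T = O(\log(n/k))$ shells needed to exhaust the $n$ points.

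Finally I would evaluate this geometric sum in the two regimes that produce the two terms of the $\max$: when $\gamma\ge 2$ the ratio $2/\gamma\le 1$ and the sum is $O(\log(n/k))$, giving $O(k\log(n/k))$; when $\gamma<2$ the sum is dominated by its last term $(2/\gamma)^{T}=(n/k)^{1-\log_{2}\gamma}$, giving $O(k(n/k)^{1-\log_{2}\gamma})$. Hence the composite-index failure probability is at most $\frac{Cm}{\tilde{k}-k+1}\max\!\left(k\log(n/k),\,k(n/k)^{1-\log_{2}\gamma}\right)$ for an absolute constant $C$, and choosing $\tilde{k}$ a large enough constant multiple of $\max(k\log(n/k),k(n/k)^{1-\log_{2}\gamma})$, absorbing the fixed parameter $m$ and the target $\alpha$ into that constant, drives this below $\alpha$. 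The main obstacle I anticipate is the shell-counting step: one must carefully translate the doubling property into a rank-to-distance bound, handle ties at radius $r_{0}$ (so that $|B_{q}(r_{0})|$ is essentially $k$), and track the dependence of the number of shells $T$ on $\log(n/k)$ so that the two regimes of $\gamma$ assemble into exactly the claimed $\max$ expression.
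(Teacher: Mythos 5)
Your proposal is correct and follows essentially the same route as the paper's own proof (its Lemmas \ref{lem:prob_num_extraneous_points} and \ref{lem:prob_num_extraneous_points_sparsity}): the same reduction of a composite-index miss to at least $\tilde{k}-k+1$ false positives in some simple index, the same invocation of Theorem \ref{thm:multi_order_under_proj} together with the bound $1-\frac{2}{\pi}\cos^{-1}(x)\leq x$, and the same geometric-shell evaluation of the sparsity sum split into the regimes $\gamma\geq 2$ and $1\leq\gamma<2$. The only minor deviations are that the paper combines the $m$ simple indices via independence of the projection directions, yielding a bound of the form $\left[1-\beta\right]^{m}$, rather than your union bound $m\beta$ (immaterial for constant $m$), and it pre-empts the tie-at-$r_{0}$ obstacle you flagged by splitting extraneous points into ``reasonable'' ones among the $2k$-nearest neighbours (at most $k$, simply discarded by requiring $n_{0}-k$ of the other kind) and ``silly'' ones of rank at least $2k+1$, so that Theorem \ref{thm:multi_order_under_proj} is applied only to points strictly farther from $q$ than $p^{(k)}$.
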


\begin{thm}
\label{thm:data_indep_alg_correctness}
For a dataset with global relative sparsity $(k,\gamma)$, for any $\epsilon > 0$, there is some $L$ and $\tilde{k} \in \Omega(\max(k\log (n/k),k(n/k)^{1-\log_{2}\gamma}))$ such that the algorithm returns the correct set of $k$-nearest neighbours with probability of at least $1 - \epsilon$. 
\end{thm}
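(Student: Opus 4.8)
The plan is to use a straightforward probability amplification argument over the $L$ composite indices, built directly on top of Lemma~\ref{lem:single_proj_failure_prob}. First I would fix $\tilde{k}$ as guaranteed by that lemma, so that a \emph{single} composite index fails to retrieve all $k$ true nearest neighbours with probability at most some constant $\alpha < 1$; this $\tilde{k}$ is inherited unchanged and hence retains the stated $\Omega(\max(k\log(n/k),\, k(n/k)^{1-\log_2\gamma}))$ growth. The key structural observation is that the algorithm's last step performs exhaustive search over $\bigcup_{l\in[L]} S_l$ and returns the $k$ closest points in Euclidean distance; therefore the algorithm returns the correct answer if and only if every true $k$-nearest neighbour lies in $\bigcup_{l\in[L]} S_l$ (if they all lie in the union, selecting the $k$ closest members of the union recovers them exactly, since they are the $k$ closest points overall).

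Next I would relate the global failure event to the per-index failure events. Let $F_l$ denote the event that $S_l$ omits at least one true $k$-nearest neighbour. If $F_l$ is false for even a single $l$, then that $S_l$ already contains all $k$ true nearest neighbours, so the union does as well and the algorithm succeeds. Contrapositively, the algorithm fails only if $F_l$ occurs for \emph{every} $l\in[L]$, which gives $\Pr(\text{failure}) \le \Pr\!\left(\bigcap_{l=1}^{L} F_l\right)$. The crucial point I would then invoke is that the projection vectors $\{u_{jl}\}$ are drawn independently across distinct values of $l$, so each $F_l$ depends only on the disjoint block $\{u_{jl}\}_{j\in[m]}$ and the events $F_1,\ldots,F_L$ are mutually independent. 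Combined with Lemma~\ref{lem:single_proj_failure_prob}, this yields $\Pr(\text{failure}) \le \prod_{l=1}^{L}\Pr(F_l) \le \alpha^L$. Choosing $L \ge \log(1/\epsilon)/\log(1/\alpha)$ forces $\alpha^L \le \epsilon$, completing the argument.

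The main obstacle is not the amplification arithmetic, which is routine, but rather pinning down the two reductions precisely: first, that correctness of the returned set is equivalent to coverage of the true $k$-nearest neighbours by the union $\bigcup_{l} S_l$, which relies on the final search being exact over the candidate set; and second, that a single successful composite index already guarantees overall success, so that the global failure event is contained in the \emph{intersection} of all per-index failures rather than merely bounded by a union over them. Getting this containment direction right — intersection rather than union — is exactly what converts the per-index constant $\alpha$ into exponential-in-$L$ decay, and it is the one step I would state with care.
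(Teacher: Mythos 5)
Your proposal is correct and takes essentially the same route as the paper's own proof: inherit $\tilde{k}$ and the per-index failure bound $\alpha < 1$ from Lemma~\ref{lem:single_proj_failure_prob}, observe that overall failure requires \emph{all} $L$ independently constructed composite indices to fail, and amplify to $\alpha^{L} \leq \epsilon$ by choosing $L$ sufficiently large. The paper states this more tersely, but your careful handling of the intersection-versus-union containment and the exactness of the final exhaustive search over $\bigcup_{l\in[L]} S_{l}$ is precisely the implicit content of its argument.
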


The above result suggests that we should choose $\tilde{k} \in \Omega(\max(k\log (n/k),k(n/k)^{1-\log_{2}\gamma}))$ to ensure the algorithm succeeds with high probability. Next, we analyze the time and space complexity of the algorithm. Proofs of the following results are found in the supplementary material. 

\begin{thm}
\label{thm:data_indep_alg_query_time_complexity}
The algorithm takes $O(\max(dk\log (n/k),dk(n/k)^{1-1 / d'}))$ time to retrieve the $k$-nearest neighbours at query time, where $d'$ denotes the intrinsic dimension of the dataset. 
\end{thm}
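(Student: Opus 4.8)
The plan is to decompose the running time of the querying procedure into the distinct kinds of work it performs and to show that each piece is dominated by $O(d\tilde{k})$ once $\tilde{k}$ is set to the value guaranteed by Theorem~\ref{thm:data_indep_alg_correctness}. Throughout I treat the number of simple indices $m$ and the number of composite indices $L$ as constants fixed by the target failure probability $\epsilon$, consistent with their absence from the stated bound, so that only $n$, $d$ and $k$ remain.

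First I would dispose of the work that does not touch the ambient dimension. Forming the projected query coordinates $\overline{q}_{jl} = \langle q, u_{jl}\rangle$ costs $O(mLd)$. For each of the $mL$ simple indices I would locate the position of $\overline{q}_{jl}$ in the sorted order once, at cost $O(\log n)$, and then produce the ``$i$-th closest key'' incrementally: maintain a left pointer and a right pointer straddling $\overline{q}_{jl}$ and, at each step, advance whichever side has the key closer to $\overline{q}_{jl}$. In a skip list (or self-balancing BST) each such advance, together with the associated counter increment $C_l[\cdot]$ and the test $C_l[\cdot]=m$, is $O(1)$ amortized. Since the outer loop runs for at most $\tilde{k}$ iterations, this phase contributes $O(mL(\log n + \tilde{k}))$, with no dependence on $d$.

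Next I would bound the work that does involve $d$, namely the exact distance computations used to select the output. The key combinatorial step is the claim that $|S_l| \le \tilde{k}$ for every $l$: each outer iteration performs exactly $m$ increments within composite index $l$, so after $\tilde{k}$ iterations at most $m\tilde{k}$ increments have been dispensed, and because a point joins $S_l$ only upon receiving its $m$-th increment, at most $\tilde{k}$ distinct points can enter $S_l$. Hence $\left|\bigcup_{l\in[L]} S_l\right| \le L\tilde{k}$, computing each candidate's Euclidean distance to $q$ costs $O(d)$, and selecting the $k$ nearest by linear-time selection costs $O(L\tilde{k})$; this phase is therefore $O(dL\tilde{k})$.

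Combining the pieces gives $O\left(mLd + mL(\log n + \tilde{k}) + dL\tilde{k}\right)$; with $m,L$ constant and the $O(\log n)$ localization term absorbed by $O(d\tilde{k})$, this collapses to $O(d\tilde{k})$. Substituting $\tilde{k} \in \Theta(\max(k\log(n/k), k(n/k)^{1-\log_2\gamma}))$ and the identity $\log_2\gamma = 1/d'$ from the data-density discussion yields the claimed $O(\max(dk\log(n/k), dk(n/k)^{1-1/d'}))$. The step I expect to demand the most care is justifying the $O(1)$ amortized cost of each incremental next-closest retrieval: one must verify both that the two-pointer scheme genuinely emits keys in nondecreasing distance to $\overline{q}_{jl}$ and that predecessor/successor advances are $O(1)$ amortized in the chosen structure, since a naive re-search would reinsert a $\log n$ factor into all $\tilde{k}$ iterations and break the target bound.
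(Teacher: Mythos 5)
Your proof is correct and takes essentially the same route as the paper's: decompose the query time into projection cost, index-search cost, and the dominant $O(d\tilde{k})$ distance computations over the candidates, finish with quickselect, then substitute $\tilde{k}\in\Theta(\max(k\log(n/k),k(n/k)^{1-\log_{2}\gamma}))$ and $d'=1/\log_{2}\gamma$. The only difference is that you make explicit two details the paper's one-line ``searching takes $O(\log n)$ time'' leaves implicit --- the amortized-$O(1)$ two-pointer emission of successive nearest keys (needed so the per-iteration lookup does not reintroduce a $\log n$ factor when $d<\log n$) and the bound $|S_{l}|\leq\tilde{k}$ on candidates per composite index --- which is a sound refinement rather than a different argument.
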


\begin{thm}
\label{thm:data_indep_alg_construction_time_complexity}
The algorithm takes $O(dn+n\log n)$ time to preprocess the data points in $D$ at construction time. 
\end{thm}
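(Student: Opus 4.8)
The plan is to read off the cost of each step in the construction procedure (Algorithm \ref{alg_construct}) and sum them, treating the number of simple indices $m$ and the number of composite indices $L$ as constants, as is done implicitly in the query-time bound of Theorem \ref{thm:data_indep_alg_query_time_complexity}. First I would bound the cost of drawing the $mL$ random unit vectors $\{u_{jl}\}$: each unit vector in $\mathbb{R}^{d}$ can be generated by sampling $d$ i.i.d.\ standard Gaussians and normalizing, which takes $O(d)$ time, so this step costs $O(mLd) = O(d)$; allocating the $mL$ empty trees or skip lists is $O(mL) = O(1)$.

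Next I would analyze the triple loop over $j \in [m]$, $l \in [L]$, $i \in [n]$, which performs $mLn$ iterations. Each iteration does two things: it computes the projection $\overline{p}^{i}_{jl} = \langle p^{i}, u_{jl}\rangle$, an inner product of two $d$-dimensional vectors costing $O(d)$; and it inserts the key-value pair $(\overline{p}^{i}_{jl}, i)$ into the data structure $T_{jl}$. If $T_{jl}$ is a self-balancing binary search tree, an insertion into a tree holding at most $n$ elements takes $O(\log n)$ worst-case time; if it is a skip list, the insertion takes $O(\log n)$ expected time (or with high probability). Summing over all $mLn$ iterations, the projection computations contribute $O(mLnd) = O(nd)$ and the insertions contribute $O(mLn\log n) = O(n\log n)$.

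Adding the three contributions yields $O(d) + O(nd) + O(n\log n) = O(dn + n\log n)$, which is the claimed bound. This is a direct accounting of Algorithm \ref{alg_construct} rather than a genuinely hard argument, so I do not anticipate a substantive obstacle; the only points that require care are the justification that a single insertion costs $O(\log n)$, which follows from the standard balancing guarantees of self-balancing trees (worst case) or skip lists (expected / with high probability), and the convention that $m$ and $L$ are absorbed into the hidden constant, consistent with the other complexity results in this section. I would state both conventions explicitly so that the derived bound matches the form given in the theorem.
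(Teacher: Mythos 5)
Your proposal is correct and follows essentially the same argument as the paper's proof: treat $m$ and $L$ as constants, charge $O(dn)$ for computing all projections and $O(n\log n)$ for the insertions into the self-balancing binary search trees or skip lists. The only difference is that you additionally account for generating the random unit vectors and allocating the empty data structures, which the paper silently omits since these costs are dominated.
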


\begin{thm}
\label{thm:data_indep_alg_update_time_complexity}
The algorithm requires $O(d+\log n)$ time to insert a new data point and $O(\log n)$ time to delete a data point. 
\end{thm}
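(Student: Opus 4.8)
The plan is to bound the update cost index-by-index and then sum, exploiting the fact that the number of simple indices $mL$ is a constant independent of both $n$ and $d$ (these parameters are fixed by the target failure probability and the order-inversion analysis of Lemma~1, neither of which depends on $n$ or $d$), so there are only $O(1)$ structures $T_{jl}$ to touch on any update.

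For insertion of a new point $p$, I would separate the cost into a projection phase and a structure-update phase. In the projection phase, for each pair $(j,l)$ we must form $\overline{p}_{jl}=\langle p,u_{jl}\rangle$, an inner product of two $d$-dimensional vectors costing $O(d)$; since there are $mL=O(1)$ such inner products, the total is $O(d)$. In the structure-update phase, each projected value is inserted into its tree $T_{jl}$; because each $T_{jl}$ is a self-balancing binary search tree (or skip list) holding at most $n$ keys, a single insertion costs $O(\log n)$ (worst case for balanced BSTs, expected for skip lists), and again there are only $O(1)$ of them, so this phase costs $O(\log n)$. Summing the two phases yields $O(d+\log n)$.

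For deletion the key point, and the step I expect to be the main obstacle, is arguing for $O(\log n)$ rather than $O(d+\log n)$; that is, we must avoid re-incurring the $O(d)$ inner-product cost. I would resolve this by augmenting the structure so that for each point index $i$ we retain either its precomputed projections $\{\overline{p}^{i}_{jl}\}_{j\in[m],l\in[L]}$ or direct pointers to the nodes holding point $i$ in each $T_{jl}$. Given this auxiliary information, deleting point $i$ reduces to locating and removing its node in each of the $O(1)$ trees, which costs $O(\log n)$ per tree (search by the stored key, or follow the stored pointer, then perform the standard delete-and-rebalance), for a total of $O(\log n)$ with no dependence on $d$.

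Finally I would observe that this bookkeeping does not disturb the insertion bound: storing $O(1)$ scalars or pointers per point adds only $O(1)$ work per index to an insertion, which is dominated by the $O(d+\log n)$ already charged. Together these establish the $O(d+\log n)$ insertion and $O(\log n)$ deletion bounds claimed in the statement.
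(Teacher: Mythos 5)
Your proof is correct and follows essentially the same route as the paper's: $O(d)$ to compute the $mL=O(1)$ projections of the new point plus $O(\log n)$ per insertion into each self-balancing tree or skip list, and $O(\log n)$ total for removals. Your explicit bookkeeping for deletion---retaining each point's projected keys or pointers to its nodes so the $O(d)$ inner products need not be recomputed---is a detail the paper's proof silently assumes when it says a point is ``simply removed,'' so your write-up is if anything slightly more complete, and the $O(1)$ extra scalars per point remain consistent with the $O(n)$ space bound of Theorem~\ref{thm:data_indep_alg_space_complexity}.
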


\begin{thm}
\label{thm:data_indep_alg_space_complexity}
The algorithm requires $O(n)$ space in addition to the space used to store the data. 
\end{thm}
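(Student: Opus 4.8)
The plan is to bound the space occupied by the persistent data structure returned by \textsc{Construct} (Algorithm~\ref{alg_construct}) component by component, and then confirm that the temporary structures allocated by \textsc{Query} (Algorithm~\ref{alg_query}) do not exceed this bound. The data structure consists of exactly two families of objects: the $mL$ search structures $\{T_{jl}\}$ (self-balancing binary search trees or skip lists) and the $mL$ projection directions $\{u_{jl}\}$.

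First I would account for the search structures. Each $T_{jl}$ receives exactly one insertion per data point, so it stores $n$ nodes; each node holds a scalar key $\overline{p}^{i}_{jl}$ and an integer value $i$, both of size $O(1)$. A self-balancing binary search tree storing $n$ records uses $O(n)$ space, and a skip list storing $n$ records uses $O(n)$ space in expectation. Summing over the $mL$ structures yields $O(mLn)$ space. Since $m$ is a fixed constant and, by Theorem~\ref{thm:data_indep_alg_correctness}, $L$ depends only on the target failure probability $\epsilon$ and not on $n$ or $d$, both $m$ and $L$ are $O(1)$, so this contribution is $O(n)$.

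Next I would account for the projection directions. There are $mL = O(1)$ of them, each a unit vector in $\mathbb{R}^{d}$, for a total of $O(d)$ space. This term is independent of $n$ but grows with $d$; to reconcile it with the claimed bound I would observe that storing the dataset itself already requires $\Theta(nd)$ space (each of the $n$ points is $d$-dimensional), so the $O(d)$ used by a constant number of projection vectors is subsumed by ``the space used to store the data'' and contributes nothing additional. Alternatively, one may store only the $mL = O(1)$ random seeds used to generate the $u_{jl}$ and regenerate each vector on demand, removing the $d$-dependence entirely.

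Finally I would verify the query-time working memory: \textsc{Query} allocates the counter arrays $C_{l}$, each of size $n$, and the candidate sets $S_{l}$, each of size at most $n$, for $l \in [L]$. As $L = O(1)$, these occupy $O(n)$ space in total, matching the bound. Combining the three accountings gives $O(n)$ space in addition to the stored data. The accounting for the trees and the working arrays is entirely routine; the single point I would be careful about is the $O(d)$ projection-vector term, since the result is asserted to be independent of the ambient dimension $d$. The main obstacle is therefore not a calculation but a modelling choice: I must justify folding that term into the data-storage cost (or eliminating it via seeded regeneration) rather than reporting the literal total of $O(n + d)$.
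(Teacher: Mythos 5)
Your proof is correct and follows essentially the same route as the paper's, whose entire argument is that the only persistent structures are the $mL$ binary search trees or skip lists, each holding $n$ entries, so with $m$ and $L$ treated as constants the additional space is $O(n)$. Your extra accounting---the $O(d)$ cost of storing the $mL$ projection vectors (a term the paper's proof silently omits despite the claim of dimension-independent space, and which you correctly either subsume into the $\Theta(nd)$ data-storage cost or eliminate via seeded regeneration) and the $O(n)$ query-time arrays $C_{l}$ and sets $S_{l}$---makes your version strictly more careful than the paper's two-sentence proof without changing the underlying approach.
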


\subsection{Data-Dependent Version}

Conceptually, performance of the proposed algorithm depends on two factors: how likely the index returns the true nearest neighbours before other points and when the algorithm stops retrieving points from the index. The preceding sections primarily focused on the former; in this section, we take a closer look at the latter. 

One strategy, which is used by the data-independent version of the algorithm, is to stop after a preset number of iterations of the outer loop. Although simple, such a strategy leaves much to be desired. First of all, in order to set the number of iterations, it requires knowledge of the global relative sparsity of the dataset, which is rarely known a priori. Computing this is either very expensive in the case of  datasets or infeasible in the case of streaming data, as global relative sparsity may change as new data points arrive. More importantly, it is unable to take advantage of the local relative sparsity in the neighbourhood of the query. A method that is capable of adapting to local relative sparsity could potentially be much faster because query points tend to be close to the manifold on which points in the dataset lie, resulting in the dataset being sparse in the neighbourhood of the query point. 

Ideally, the algorithm should stop as soon as it has retrieved the true nearest neighbours. Determining if this is the case amounts to asking if there exists a point that we have not seen lying closer to the query than the points we have seen. At first sight, because nothing is known about unseen points, it seems not possible to do better than exhaustive search, as we can only rule out the existence of such a point after computing distances to all unseen points. Somewhat surprisingly, by exploiting the fact that the projections associated with the index are random, it is possible to make inferences about points that we have never seen. We do so by leveraging ideas from statistical hypothesis testing. 

After each iteration of the outer loop, we perform a hypothesis test, with the null hypothesis being that the complete set of the $k$-nearest neighbours has not yet been retrieved. Rejecting the null hypothesis implies accepting the alternative hypothesis that all the true $k$-nearest neighbours have been retrieved. At this point, the algorithm can safely terminate while guaranteeing that the probability that the algorithm fails to return the correct results is bounded above by the significance level. The test statistic is an upper bound on the probability of missing a true $k$-nearest neighbour. The resulting algorithm does not require any prior knowledge about the dataset and terminates earlier when the dataset is sparse in the neighbourhood of the query; for this reason, we will refer to this version of the algorithm as the data-dependent version. 

More concretely, as the algorithm retrieves candidate points, it computes their true distances to the query and maintains a list of $k$ points that are the closest to the query among the points retrieved from all composite indices so far. Let $\tilde{p}^{(i)}$ and $\tilde{p}_{l}^{\mathrm{max}}$ denote the $i^{\mathrm{th}}$ closest candidate point to $q$ retrieved from all composite indices and the farthest candidate point from $q$ retrieved from the $l^{\mathrm{th}}$ composite index respectively. When the number of candidate points exceeds $k$, the algorithm checks if $\prod_{l=1}^{L}\left(1-\left(\frac{2}{\pi}\cos^{-1}\left(\left\Vert \tilde{p}^{(k)}-q\right\Vert _{2}/\left\Vert \tilde{p}_{l}^{\mathrm{max}} \vphantom{\tilde{p}^{(k)}} -q\right\Vert _{2}\right)\right)^{m}\right)\leq\epsilon$, where $\epsilon$ is the maximum tolerable failure probability, after each iteration of the outer loop. If the condition is satisfied, the algorithm terminates and returns $\left\{ \tilde{p}^{(i)}\right\} _{i=1}^{k}$. 

We show the correctness and running time of this algorithm below. Proofs of the following results are found in the supplementary material. 

\begin{thm}
\label{thm:data_dep_alg_correctness}
For any $\epsilon > 0$, $m$ and $L$, the data-dependent algorithm returns the correct set of $k$-nearest neighbours of the query $q$ with probability of at least $1 - \epsilon$. 
\end{thm}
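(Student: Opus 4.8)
The plan is to show that whenever the stopping condition fires, the probability that the returned set $\{\tilde p^{(i)}\}_{i=1}^{k}$ differs from the true $k$-nearest neighbours is at most $\epsilon$. The first step is a purely deterministic reduction. Since the algorithm returns the $k$ retrieved candidates closest to $q$, and any true $k$-nearest neighbour that has been retrieved is automatically among the $k$ closest candidates, the algorithm can fail only if some true $k$-nearest neighbour $p^{*}$ is never retrieved from any composite index. Moreover, if such a $p^{*}$ exists, at most $k-1$ candidates (the other true neighbours closer than $p^{*}$) can lie closer to $q$ than $p^{*}$, so the $k$-th closest retrieved candidate satisfies $\|\tilde p^{(k)}-q\|_{2}\ge\|p^{*}-q\|_{2}$. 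I would also record the regime in which the stopping statistic is well-defined, namely $\|\tilde p^{(k)}-q\|_{2}\le\|\tilde p_{l}^{\mathrm{max}}-q\|_{2}$ for every $l$, so that each ratio fed into $\cos^{-1}$ lies in $[0,1]$ and $\tilde p_{l}^{\mathrm{max}}$ is genuinely longer than $p^{*}$; this is exactly what makes the order-inversion framing below meaningful.

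The second step converts ``not retrieved from composite index $l$'' into an order-inversion event to which the single-pair bound (Lemma 1) applies. A point enters composite index $l$ once all $m$ of its constituent simple indices have encountered it; since $\tilde p_{l}^{\mathrm{max}}$ has been retrieved, it has been encountered by all $m$ simple indices, so if $p^{*}$ is missed by index $l$ then in at least one simple index the projection of $p^{*}$ must be farther from $\bar q$ than that of $\tilde p_{l}^{\mathrm{max}}$. Taking $v^{s}=p^{*}-q$ and $v^{l}=\tilde p_{l}^{\mathrm{max}}-q$, this is precisely the length-inversion event, so any one simple index inverts the pair with probability at most $1-\tfrac{2}{\pi}\cos^{-1}(\|p^{*}-q\|_{2}/\|\tilde p_{l}^{\mathrm{max}}-q\|_{2})$. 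Using independence of the $m$ directions within index $l$, the probability $p^{*}$ is missed by index $l$ is at most $1-(\tfrac{2}{\pi}\cos^{-1}(\|p^{*}-q\|_{2}/\|\tilde p_{l}^{\mathrm{max}}-q\|_{2}))^{m}$; using independence across the $L$ composite indices, the probability $p^{*}$ is missed everywhere is at most the product of these factors over $l$. Finally, since $\|p^{*}-q\|_{2}\le\|\tilde p^{(k)}-q\|_{2}$ and $\cos^{-1}$ is decreasing, each factor is dominated by its counterpart with $\|p^{*}-q\|_{2}$ replaced by $\|\tilde p^{(k)}-q\|_{2}$, which is exactly the quantity the stopping rule caps at $\epsilon$.

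The delicate point, and the step I expect to be the main obstacle, is that $\tilde p^{(k)}$ and the $\tilde p_{l}^{\mathrm{max}}$ are not fixed vectors but random quantities determined by the very projections whose behaviour we are bounding, and that the failure event is ``\emph{some} true $k$-nearest neighbour is missed'' rather than a statement about one fixed $p^{*}$. A naive union bound over the up to $k$ candidate missing neighbours loses a factor of $k$ and yields only $k\epsilon$. To retain the bound $\epsilon$, I would first pass to the weaker event that \emph{for every} $l$ at least one true neighbour is missed by composite index $l$ (which contains the failure event, since a neighbour missed everywhere is in particular missed in each index) and use independence across composite indices to write the global probability as a product over $l$, thereby confining the union over neighbours to a single composite index. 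Within each index I would then control the per-index event without a factor of $k$ via the multi-vector bound of Theorem \ref{thm:multi_order_under_proj}, whose $1/k'$ coefficient is designed precisely to keep a neighbour-by-neighbour union from accumulating, taking $v^{s}_{\mathrm{max}}$ to be the farthest true neighbour (whose distance is at most $\|\tilde p^{(k)}-q\|_{2}$). The randomness of the thresholds I would handle by reading the stopping rule as a hypothesis test whose statistic is, by the construction above, a valid conditional upper bound on the miss probability given the observed $\tilde p^{(k)}$ and $\tilde p_{l}^{\mathrm{max}}$; carefully justifying that these conditional bounds may be multiplied across indices despite the shared projection randomness is where the real work lies.
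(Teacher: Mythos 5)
Your proposal follows the paper's own proof essentially step for step: a missed true neighbour $p^{*}$ forces, in every composite index $l$, some constituent simple index to rank $\tilde p_{l}^{\mathrm{max}}$ (and the other retrieved candidates) before $p^{*}$; the per-simple-index inversion probability is bounded by $1-\frac{2}{\pi}\cos^{-1}\bigl(\Vert p^{*}-q\Vert_{2}/\Vert\tilde p_{l}^{\mathrm{max}}-q\Vert_{2}\bigr)$ (the paper invokes the $k'=N$ case of Lemma~\ref{lem:single_order_under_proj} on all farther candidates, which collapses to your single-pair bound); independence within the $m$ projections and across the $L$ composite indices yields the product; and $\Vert p^{*}-q\Vert_{2}\le\Vert p^{(k)}-q\Vert_{2}\le\Vert\tilde p^{(k)}-q\Vert_{2}$ dominates the whole expression by the stopping statistic, which is at most $\epsilon$ at termination. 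The two subtleties you flag --- absorbing the union over \emph{which} neighbour is missed via Theorem~\ref{thm:multi_order_under_proj} rather than paying a factor of $k$, and the fact that $\tilde p_{l}^{\mathrm{max}}$, $\tilde p^{(k)}$ and the stopping time are themselves functions of the projection randomness --- are points the paper's proof silently elides (it simply fixes ``a'' missed $p^{*}$ and applies the fixed-vector lemmas), so your treatment is, if anything, more careful than the original rather than divergent from it.
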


\begin{thm}
\label{thm:data_dep_alg_time}
On a dataset with global relative sparsity $(k,\gamma)$, given fixed parameters $m$ and $L$, the data-dependent algorithm takes \small$O\left(\max\left(dk\log\left(\frac{n}{k}\right),dk\left(\frac{n}{k}\right)^{1-1/d'},\frac{d}{\left(1-\sqrt[m]{1-\sqrt[L]{\epsilon}}\right)^{d'}}\right)\right)$\normalsize \; time with high probability to retrieve the $k$-nearest neighbours at query time, where $d'$ denotes the intrinsic dimension of the dataset. 
\end{thm}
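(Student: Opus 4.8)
The plan is to reduce the running time to the number of outer-loop iterations $i^{*}$ executed before the stopping condition fires, and then bound $i^{*}$ with high probability. Each iteration advances one position in each of the $mL$ simple indices (an $O(\log n)$ operation per index in a balanced tree or skip list) and computes an $O(d)$ Euclidean distance, updating the running top-$k$ list and the test statistic, for each of the $O(mL)$ points that are newly completed by all $m$ simple indices of some composite index. With $m$ and $L$ fixed and the $O(d)$ distance computations governing the cost exactly as in Theorem~\ref{thm:data_indep_alg_query_time_complexity} (the index advances being lower-order), the total is $O(i^{*}d)$. Everything therefore reduces to bounding $i^{*}$, which I would split into two phases.

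Phase one counts the iterations until every true $k$-nearest neighbour has been returned as a candidate by at least one composite index. This is exactly what the data-independent analysis controls: by the argument behind Lemma~\ref{lem:single_proj_failure_prob} and Theorem~\ref{thm:data_indep_alg_query_time_complexity}, $O(\max(k\log(n/k),k(n/k)^{1-1/d'}))$ iterations suffice with high probability, producing the first two terms after the $O(d)$ per-iteration cost. Once this phase is over, $\tilde{p}^{(k)}$ is the true $k$th nearest neighbour, so $\|\tilde{p}^{(k)}-q\|_{2}=r_{k}$ with $|B_{q}(r_{k})|=k$; this is the anchor for phase two.

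Phase two counts the additional iterations until the test statistic reaches $\epsilon$. I would first convert the stopping condition into a geometric requirement: writing $\rho_{l}=\|\tilde{p}^{(k)}-q\|_{2}/\|\tilde{p}_{l}^{\mathrm{max}}-q\|_{2}$, forcing each factor below $\epsilon^{1/L}$ shows $\prod_{l}(1-(\tfrac{2}{\pi}\cos^{-1}\rho_{l})^{m})\le\epsilon$ holds as soon as every $\rho_{l}\le\rho^{*}:=\cos(\tfrac{\pi}{2}\sqrt[m]{1-\sqrt[L]{\epsilon}})$, i.e. as soon as each composite index has pulled in a candidate at distance at least $r_{k}/\rho^{*}$; this is a sufficient condition, so it yields an upper bound on $i^{*}$. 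The estimate $\sin x\ge\tfrac{2}{\pi}x$ gives $\rho^{*}=\Theta(1-\sqrt[m]{1-\sqrt[L]{\epsilon}})$. A composite index can retrieve at most $|B_{q}(r_{k}/\rho^{*})|$ candidates while its farthest candidate is still inside radius $r_{k}/\rho^{*}$, and iterating the global relative sparsity inequality $|B_{q}(\gamma r)|\le 2|B_{q}(r)|$ from radius $r_{k}$ out to $r_{k}/\rho^{*}$ (using $d'=1/\log_{2}\gamma$) bounds this by $O((1/\rho^{*})^{d'}|B_{q}(r_{k})|)$. The multiplicative blow-up factor $(1/\rho^{*})^{d'}=\Theta((1-\sqrt[m]{1-\sqrt[L]{\epsilon}})^{-d'})$ is the essential content of the third term $d/(1-\sqrt[m]{1-\sqrt[L]{\epsilon}})^{d'}$; taking the maximum of the three phase bounds and multiplying by the $O(d)$ per-iteration cost gives the claimed time.

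The hard part will be making phase two rigorous, since $\tilde{p}^{(k)}$ and the $\tilde{p}_{l}^{\mathrm{max}}$ are random quantities determined by the projections, and I cannot simply equate the iteration count with the size of a ball. The real obstacle is to show that, with high probability, no composite index drags in many \emph{far} points as candidates before its farthest candidate grows past $r_{k}/\rho^{*}$: here I would reuse the order-inversion bound of Theorem~\ref{thm:multi_order_under_proj} to argue that the candidates seen up to any iteration are, up to a constant factor, dominated by genuinely nearby points, so the candidate count stays within a constant factor of the ball size. Finally I would combine this with the phase-one event that $r_{k}$ has already reached the true $k$th-nearest-neighbour distance, via a union bound over the $L$ composite indices, so that correctness of the anchor and boundedness of the candidate count hold simultaneously with high probability.
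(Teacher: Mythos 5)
Your proposal is, in substance, the paper's own proof: the same two-stage decomposition (before and after all true $k$-nearest neighbours have been retrieved), the same use of the sparsity lemma (Lemma \ref{lem:prob_num_extraneous_points_sparsity}) to bound the first stage by $O(\max(k\log(n/k),k(n/k)^{1-1/d'}))$ candidates with high probability, and the same pigeonhole-plus-sparsity argument for the second stage. The only organizational difference is that you invert the stopping condition into a distance threshold $\rho^{*}=\cos\bigl(\tfrac{\pi}{2}\sqrt[m]{1-\sqrt[L]{\epsilon}}\bigr)$ and count the points of $B_{q}(r_{k}/\rho^{*})$, whereas the paper runs the computation in the other direction: it lower-bounds $\Vert\tilde{p}_{l}^{\mathrm{max}}-q\Vert_{2}$ by $\Vert p^{(n')}-q\Vert_{2}$ once $n'$ distinct candidates have been retrieved, upper-bounds the resulting ratio by $\gamma^{-\lfloor\log_{2}((n'-1)/k)\rfloor}$ via global relative sparsity, and solves $\bigl[1-\bigl(1-\gamma^{-\log_{2}((n'-1)/k)+1}\bigr)^{m}\bigr]^{L}\leq\epsilon$ for $n'$. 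The two computations are equivalent.

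The one correction worth making: the ``real obstacle'' you flag in your last paragraph does not exist, and the proposed extra appeal to Theorem \ref{thm:multi_order_under_proj} is unnecessary. A far point being dragged in early can only \emph{help}: the moment composite index $l$ retrieves any candidate at distance greater than $r_{k}/\rho^{*}$, the quantity $\Vert\tilde{p}_{l}^{\mathrm{max}}-q\Vert_{2}$ exceeds the threshold and that factor of the test statistic is small enough, so termination is only accelerated. What must be bounded is the number of candidates retrievable while the farthest candidate is still \emph{inside} the ball, and your own pigeonhole observation --- $n'$ distinct candidates all within radius $R$ forces $n'\leq|B_{q}(R)|$ --- is deterministic and already complete; it is exactly the paper's step ``the new candidate must differ from all previous ones, so $\Vert\tilde{p}_{l}^{\mathrm{max}}-q\Vert_{2}\geq\Vert p^{(n')}-q\Vert_{2}$.'' Randomness enters only through stage one (which anchors $\Vert\tilde{p}^{(k)}-q\Vert_{2}=\Vert p^{(k)}-q\Vert_{2}$; and if the test happens to fire before stage one completes, the time bound only improves, with the $\epsilon$ failure probability accounted for in Theorem \ref{thm:data_dep_alg_correctness}). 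Two smaller points: for stage one ``with high probability'' you should cite Lemma \ref{lem:prob_num_extraneous_points_sparsity} with the constant in $n_{0}$ chosen large, as the paper does, rather than Lemma \ref{lem:single_proj_failure_prob}, which only gives a constant failure probability $\alpha<1$ per composite index. And your ball count yields $O\bigl(k\,(1-\sqrt[m]{1-\sqrt[L]{\epsilon}})^{-d'}\bigr)$ candidates --- a factor of $k$ in the third term that the theorem statement omits; solving the paper's own final inequality carefully in fact also gives $n'\geq 1+2k\,(1-\sqrt[m]{1-\sqrt[L]{\epsilon}})^{-d'}$, so your derivation is the more accurate one on this point and the discrepancy lies in the paper's last algebraic step, not in your argument.
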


Note that we can make the denominator of the last argument arbitrarily close to $1$ by choosing a large $L$. 

\section{Experiments}

\begin{figure*}[t]
    \centering
    \subfloat[]{
        \includegraphics[width=0.33\textwidth]{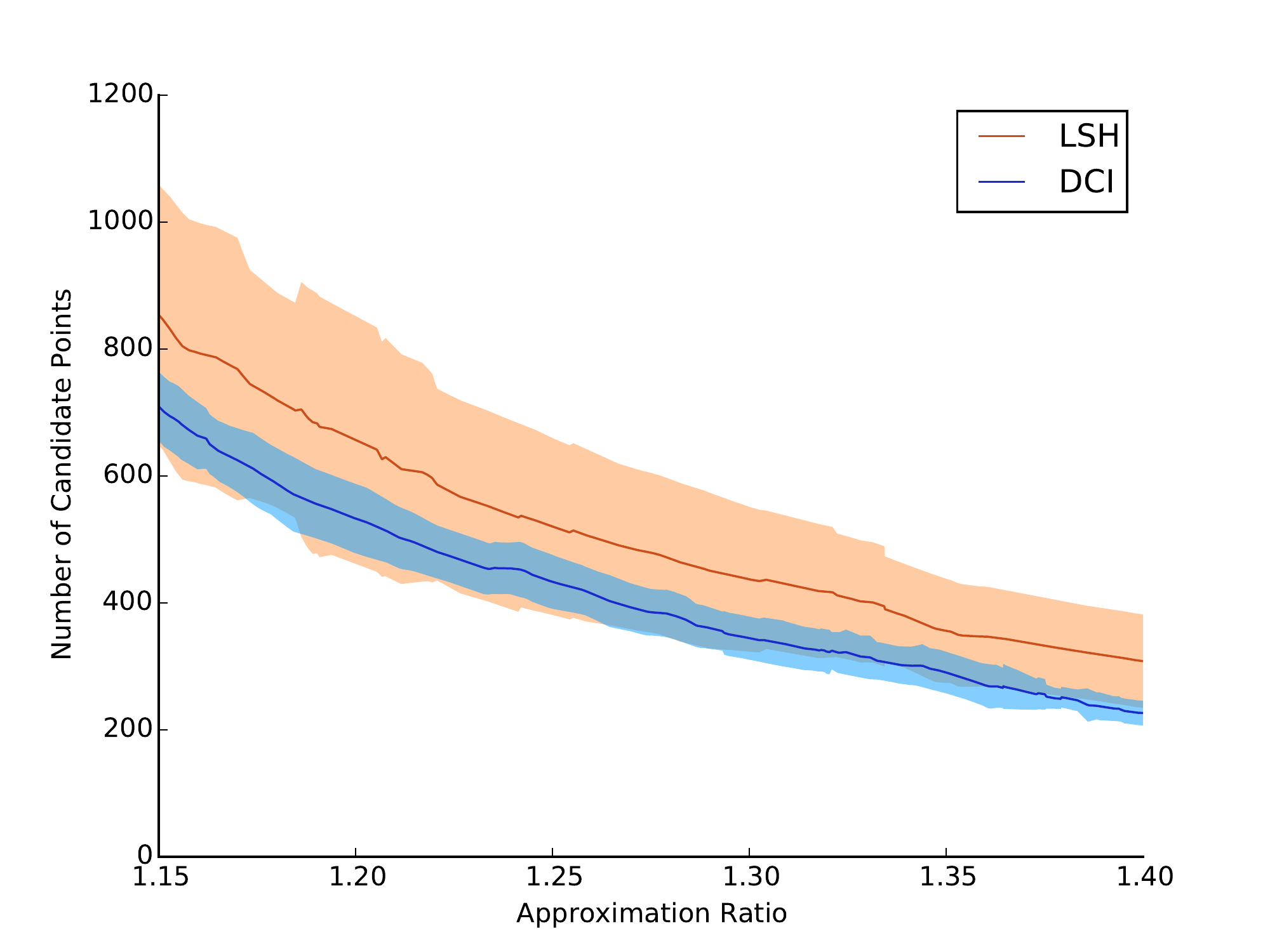}
        \label{fig:plot_cifar_perf}
    }
    \subfloat[]{
        \includegraphics[width=0.33\textwidth]{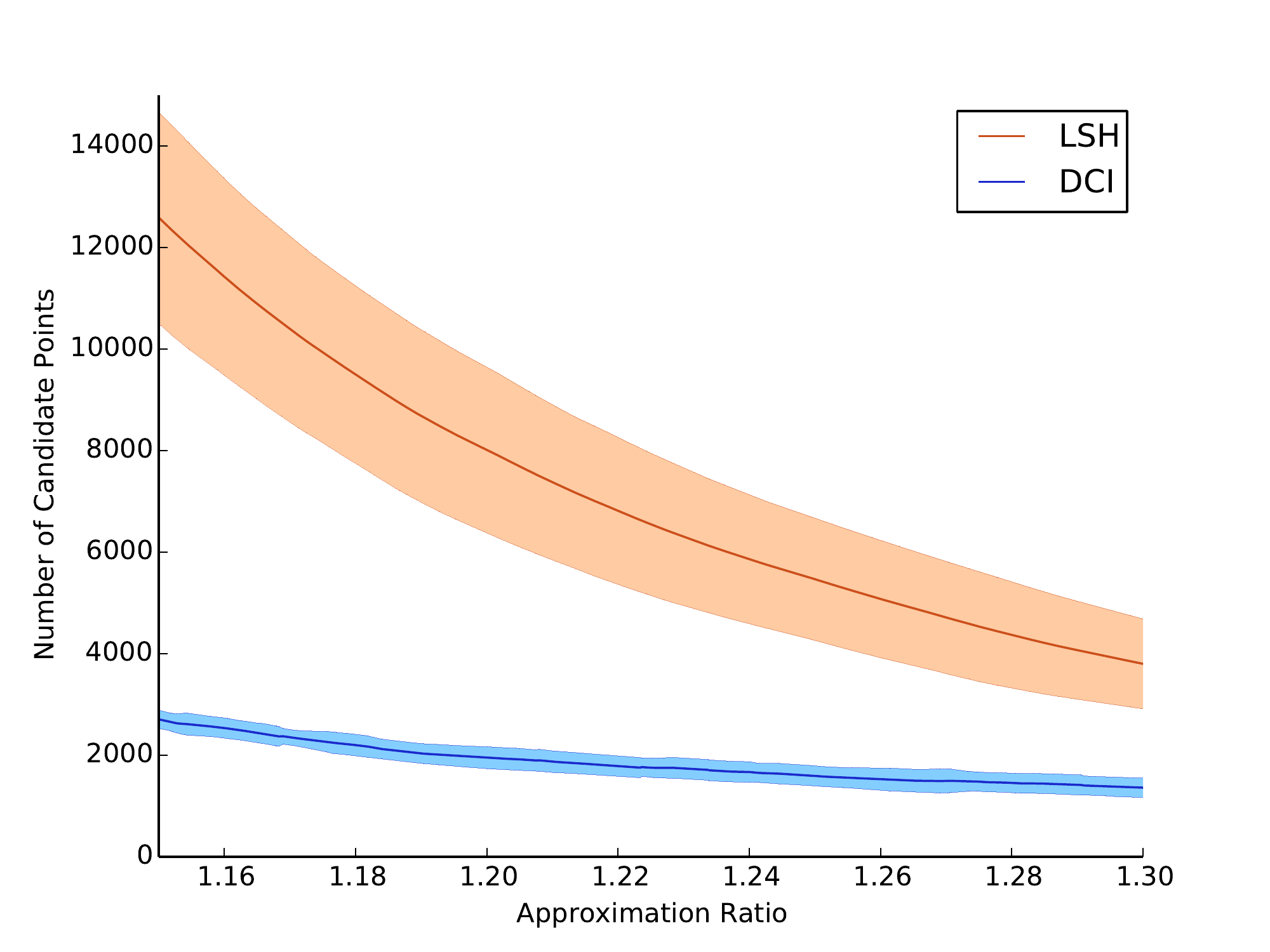}
        \label{fig:plot_mnist_perf}
    }
    \subfloat[]{
        \includegraphics[width=0.33\textwidth]{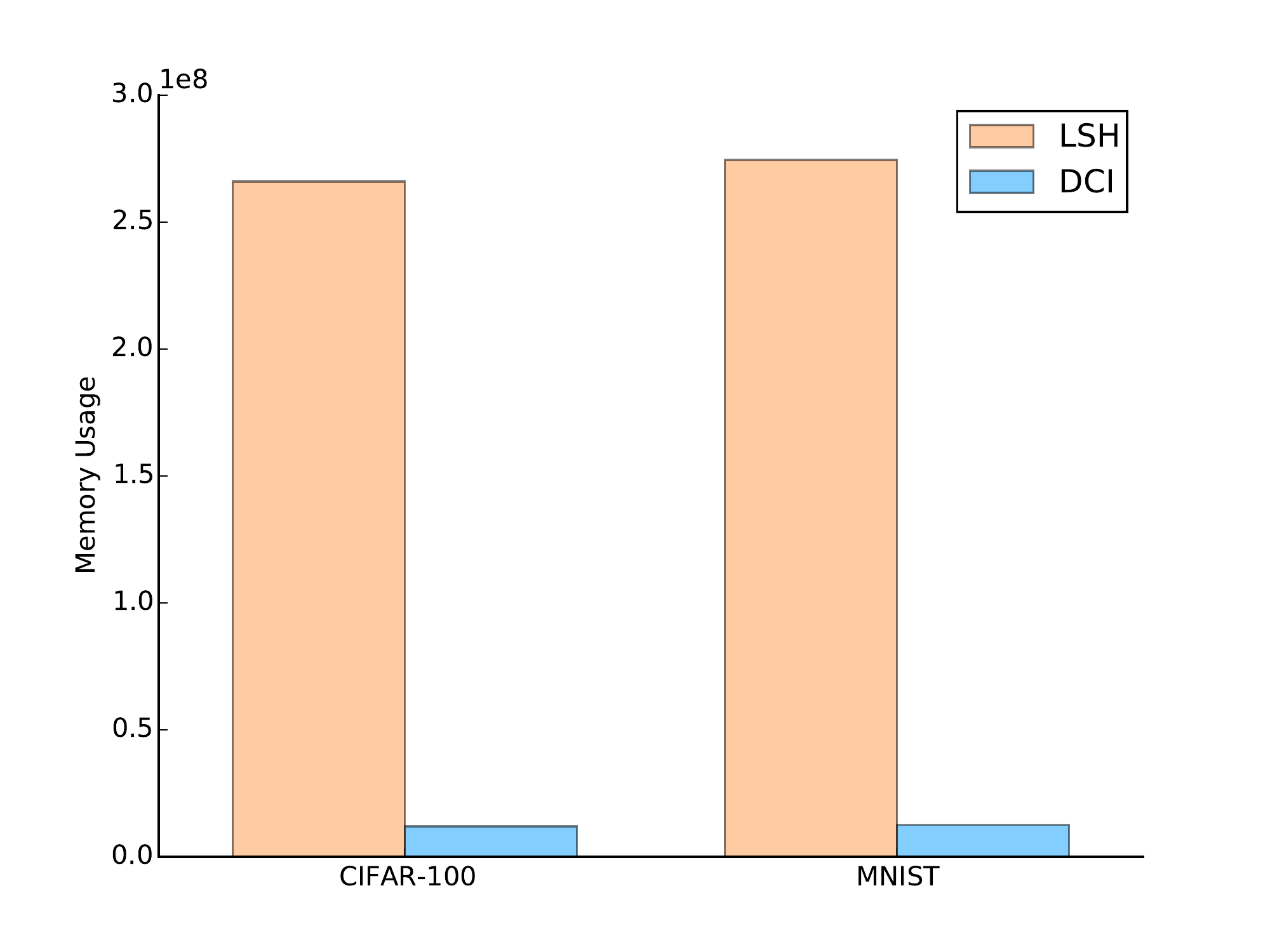}
        \label{fig:plot_mem_comp}
    }
    \caption{(a-b) Comparison of the speed of the proposed method (labelled as DCI) and LSH on (a) CIFAR-100 and (b) MNIST. Each curve represents the mean number to candidate points required to achieve varying levels of approximation quality over ten folds and the shaded area represents $\pm$1 standard deviation. (c) Comparison of the space efficiency of the proposed method and LSH on CIFAR-100 and MNIST. The height of each bar represents the average amount of memory used by each method to achieve the performance shown in (a) and (b). }
\end{figure*}

We compare the performance of the proposed algorithm to that of LSH, which is arguably the most popular method for fast nearest neighbour retrieval. Because LSH is designed for the approximate setting, under which the performance metric of interest is how close the points returned by the algorithm are to the query rather than whether the returned points are the true $k$-nearest neighbours, we empirically evaluate performance in this setting. Because the distance metric of interest is the Euclidean distance, we compare to Exact Euclidean LSH ($\mathrm{E^{2}LSH}$)~\cite{datar2004locality}, which uses hash functions designed for Euclidean space. 

We compare the performance of the proposed algorithm and LSH on the CIFAR-100 and MNIST datasets, which consist of $32 \times 32$ colour images of various real-world objects and $28 \times 28$ grayscale images of handwritten digits respectively. We reshape the images into vectors, with each dimension representing the pixel intensity at a particular location and colour channel of the image. The resulting vectors have a dimensionality of $32 \times 32 \times 3 = 3072$ in the case of CIFAR-100 and $28 \times 28 = 784$ in the case of MNIST, so the dimensionality under consideration is higher than what traditional tree-based algorithms can handle. We combine the training set and the test set of each dataset, and so we have a total of $60,000$ instances in CIFAR-100 and $70,000$ instances in MNIST. 

We randomize the instances that serve as queries using cross-validation. Specifically, we randomly select 100 instances to serve as query points and the designate the remaining instances as data points. Each algorithm is then used to retrieve approximate $k$-nearest neighbours of each query point among the set of all data points. This procedure is repeated for ten folds, each with a different split of query vs. data points. 

We compare the number of candidate points that each algorithm requires to achieve a desired level of approximation quality. We quantify approximation quality using the approximation ratio, which is defined as the ratio of the radius of the ball containing approximate $k$-nearest neighbours to the radius of the ball containing true $k$-nearest neighbours. So, the smaller the approximation ratio, the better the approximation quality. Because dimensionality is high and exhaustive search must be performed over all candidate points, the time taken for compute distances between the candidate points and the query dominates the overall running time of the querying operation. Therefore, the number of candidate points can be viewed as an implementation-independent proxy for the running time. 

Because the hash table constructed by LSH depends on the desired level of approximation quality, we construct a different hash table for each level of approximation quality. On the other hand, the indices constructed by the proposed method are not specific to any particular level of approximation quality; instead, approximation quality can be controlled at query time by varying the number of iterations of the outer loop. So, the same indices are used for all levels of approximation quality. Therefore, our evaluation scheme is biased towards LSH and against the proposed method. 

We adopt the recommended guidelines for choosing parameters of LSH and used $24$ hashes per table and $100$ tables. For the proposed method, we used $m = 25$ and $L = 2$ on CIFAR-100 and $m = 15$ and $L = 3$ on MNIST, which we found to work well in practice. In Figures \ref{fig:plot_cifar_perf} and \ref{fig:plot_mnist_perf}, we plot the performance of the proposed method and LSH on the CIFAR-100 and MNIST datasets for retrieving $25$-nearest neighbours. 

For the purposes of retrieving nearest neighbours, MNIST is a more challenging dataset than CIFAR-100. This is because the instances in MNIST form dense clusters, whereas the instances in CIFAR-100 are more visually diverse and so are more dispersed in space. Intuitively, if the query falls inside a dense cluster of points, there are many points that are very close to the query and so it is difficult to distinguish true nearest neighbours from points that are only slightly farther away. Viewed differently, because true nearest neighbours tend to be extremely close to the query on MNIST, the denominator for computing the approximation ratio is usually very small. Consequently, returning points that are only slightly farther away than the true nearest neighbours would result in a large approximation ratio. As a result, both the proposed method and LSH require far more candidate points on MNIST than on CIFAR-100 to achieve comparable approximation ratios. 

We find the proposed method achieves better performance than LSH at all levels of approximation quality. Notably, the performance of LSH degrades drastically on MNIST, which is not surprising since LSH is known to have difficulties on datasets with large variations in data density. On the other hand, the proposed method requires $61.3\% - 78.7\%$ fewer candidate points than LSH to achieve the same approximation quality while using less than $1/20$ of the memory. 

\section{Conclusion}

In this paper, we delineated the inherent deficiencies of space partitioning and presented a new strategy for fast retrieval of $k$-nearest neighbours, which we dub dynamic continuous indexing (DCI). Instead of discretizing the vector space, the proposed algorithm constructs continuous indices, each of which imposes an ordering of data points in which closeby positions approximately reflect proximity in the vector space. Unlike existing methods, the proposed algorithm allows granular control over accuracy and speed on a per-query basis, adapts to variations to data density on-the-fly and supports online updates to the dataset. We analyzed the proposed algorithm and showed it runs in time linear in ambient dimensionality and sub-linear in intrinsic dimensionality and the size of the dataset and takes space constant in ambient dimensionality and linear in the size of the dataset. Furthermore, we demonstrated empirically that the proposed algorithm compares favourably to LSH in terms of approximation quality, speed and space efficiency. 

\paragraph{Acknowledgements.} Ke Li thanks the Berkeley Vision and Learning Center (BVLC) and the Natural Sciences and Engineering Research Council of Canada (NSERC) for financial support. The authors also thank Lap Chi Lau and the anonymous reviewers for feedback. 

\bibliography{fast_knn}
\bibliographystyle{icml2016}

\newpage

\twocolumn[
\icmltitle{Fast $k$-Nearest Neighbour Search via Dynamic Continuous Indexing \\ \vspace{5pt} \large Supplementary Material}

\icmlauthor{Ke Li}{ke.li@eecs.berkeley.edu}
\icmlauthor{Jitendra Malik}{malik@eecs.berkeley.edu}
\icmladdress{University of California, Berkeley, CA 94720, United States}

\vskip 0.3in
]

Below, we present proofs of the results shown in the paper. We first prove two intermediate results, which are used to derive results in the paper. Throughout our proofs, we use $\{p^{(i)}\}_{i=1}^{n}$ to denote a re-ordering of the points $\{p^{i}\}_{i=1}^{n}$ so that $p^{(i)}$ is the $i^{\mathrm{th}}$ closest point to the query $q$. For any given projection direction $u_{jl}$ associated with a simple index, we also consider a ranking of the points $\{p^{i}\}_{i=1}^{n}$ by their distance to $q$ under projection $u_{jl}$ in nondecreasing order. We say points are ranked before others if they appear earlier in this ranking. 

\begin{lem}
\label{lem:prob_num_extraneous_points}
The probability that for all constituent simple indices of a composite index, fewer than $n_{0}$ points exist that are not the true $k$-nearest neighbours but are ranked before some of them, is at least $\left[1 - \frac{1}{n_{0} - k}\sum_{i=2k+1}^{n}\frac{\left\Vert p^{(k)}-q\right\Vert _{2}}{\left\Vert p^{(i)}-q\right\Vert _{2}}\right]^{m}$. 
\end{lem}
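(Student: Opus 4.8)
The plan is to bound the failure probability for a single simple index and then exploit the independence of the $m$ projection directions to recover the $m$-th power. Writing $v^{(i)} = p^{(i)} - q$ for the displacement vectors, so that $\|v^{(1)}\|_2 \le \|v^{(2)}\|_2 \le \cdots$ and the true $k$-nearest neighbours are exactly $p^{(1)},\dots,p^{(k)}$, I would first reinterpret the event combinatorially. A non-neighbour $p^{(i)}$ (with $i>k$) is ranked before some true $k$-nearest neighbour under a projection $u$ precisely when $\bigl|\langle v^{(i)},u\rangle\bigr| \le \bigl|\langle v^{(j)},u\rangle\bigr|$ for some $j\le k$, i.e.\ when $v^{(i)}$ is \emph{not longer than some short vector} $v^{(j)}$ under projection. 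This is exactly the configuration analysed in Theorem~\ref{thm:multi_order_under_proj}, with the true-neighbour displacements playing the role of the short vectors.

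The key reduction, which I expect to be the crux, is disposing of the non-neighbours $p^{(k+1)},\dots,p^{(2k)}$: these can be almost as close to $q$ as $p^{(k)}$, so the ratio $\|v^{(k)}\|_2/\|v^{(i)}\|_2$ is near $1$ and the cosine bound is vacuous for them. I sidestep this by a counting argument. If at least $n_0$ non-neighbours are ranked before some true $k$-nearest neighbour, then, since the block $\{p^{(k+1)},\dots,p^{(2k)}\}$ contributes at most $k$ of them, at least $n_0-k$ must come from the far block $\{p^{(i)}\}_{i=2k+1}^{n}$. Hence the bad event for one index \emph{implies} the existence of a subset of size $k'=n_0-k$ among $\{v^{(i)}\}_{i=2k+1}^{n}$ all of which are not longer than some true-neighbour vector under projection; this is precisely why the sum in the statement starts at $i=2k+1$.

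I then apply Theorem~\ref{thm:multi_order_under_proj} with long vectors $\{v^{(i)}\}_{i=2k+1}^{n}$, short vectors $\{v^{(j)}\}_{j=1}^{k}$, $v^{s}_{\mathrm{max}}=v^{(k)}$, and $k'=n_0-k$ (the required strict separation $\|v^{(i)}\|_2 > \|v^{(k)}\|_2$ holding for $i\ge 2k+1$ once distances are distinct). This bounds the single-index failure probability by $\frac{1}{n_0-k}\sum_{i=2k+1}^{n}\bigl(1-\frac{2}{\pi}\cos^{-1}(\|v^{(k)}\|_2/\|v^{(i)}\|_2)\bigr)$. To reach the stated form I would use the elementary inequality $1-\frac{2}{\pi}\cos^{-1}(x)\le x$ on $[0,1]$, verified by noting that $f(x)=x-1+\frac{2}{\pi}\cos^{-1}(x)$ is concave on $[0,1]$ with $f(0)=f(1)=0$ and is therefore nonnegative there. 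Applying this term by term gives the cleaner bound $\frac{1}{n_0-k}\sum_{i=2k+1}^{n}\|p^{(k)}-q\|_2/\|p^{(i)}-q\|_2$.

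Finally, taking complements, the probability that fewer than $n_0$ non-neighbours are ranked before some true $k$-nearest neighbour in a \emph{single} simple index is at least $1-\frac{1}{n_0-k}\sum_{i=2k+1}^{n}\|p^{(k)}-q\|_2/\|p^{(i)}-q\|_2$. Because the $m$ constituent simple indices of a composite index use independently drawn projection directions, the corresponding events are independent, so the probability that all $m$ simultaneously satisfy the bound is at least the $m$-th power of this quantity, which is the claim. (When the bracketed quantity is negative the bound is vacuous, so no separate argument is needed there.)
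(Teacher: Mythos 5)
Your proposal is correct and follows essentially the same route as the paper's own proof: the identical counting step that discards the at most $k$ extraneous points among the $2k$-nearest neighbours (the paper's ``reasonable'' extraneous points) so that at least $n_{0}-k$ must come from $\{p^{(i)}\}_{i=2k+1}^{n}$, the same application of Theorem~\ref{thm:multi_order_under_proj} with $k'=n_{0}-k$ and $v^{s}_{\mathrm{max}}=p^{(k)}-q$, the same elementary bound $1-\frac{2}{\pi}\cos^{-1}(x)\leq x$, and the same complement-then-$m$-th-power step using independence of the $m$ projection directions. The only differences are cosmetic: you apply the cosine inequality term by term before taking complements and powers rather than after (equivalent by monotonicity), and you make explicit two points the paper leaves implicit, namely the independence justification and the vacuity of the bound when the bracketed quantity is negative.
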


\begin{proof}
For any given simple index, we will refer to the points that are not the true $k$-nearest neighbours but are ranked before some of them as \emph{extraneous points}. We furthermore categorize the extraneous points as either \emph{reasonable} or \emph{silly}. An extraneous point is reasonable if it is one of the $2k$-nearest neighbours, and is silly otherwise. Since there can be at most $k$ reasonable extraneous points, there must be at least $n_{0} - k$ silly extraneous points. Therefore, the event that $n_{0}$ extraneous points exist must be contained in the event that $n_{0} - k$ silly extraneous points exist. 

We find the probability that such a set of silly extraneous points exists for any given simple index. By Theorem \ref{thm:multi_order_under_proj}, where we take $\{v^{s}_{i'}\}_{i'=1}^{N'}$ to be $\{p^{(i)} - q\}_{i=1}^{k}$, $\{v^{l}_{i}\}_{i=1}^{N}$ to be $\{p^{(i)} - q\}_{i=2k+1}^{n}$ and $k'$ to be $n_{0} - k$, the probability that there are at least $n_{0} - k$ silly extraneous points is at most $\frac{1}{n_{0} - k}\sum_{i=2k+1}^{n}\left(1-\frac{2}{\pi}\cos^{-1}\left(\frac{\left\Vert p^{(k)} - q\right\Vert _{2}}{\left\Vert p^{(i)} - q\right\Vert _{2}}\right)\right)$. This implies that the probability that at least $n_{0}$ extraneous points exist is bounded above by the same quantity, and so the probability that fewer than $n_{0}$ extraneous points exist is at least $1 - \frac{1}{n_{0} - k}\sum_{i=2k+1}^{n}\left(1-\frac{2}{\pi}\cos^{-1}\left(\frac{\left\Vert p^{(k)} - q\right\Vert _{2}}{\left\Vert p^{(i)} - q\right\Vert _{2}}\right)\right)$. Hence, the probability that fewer than $n_{0}$ extraneous points exist for all constituent simples indices of a composite index is at least $\left[1 - \frac{1}{n_{0} - k}\sum_{i=2k+1}^{n}\left(1-\frac{2}{\pi}\cos^{-1}\left(\frac{\left\Vert p^{(k)} - q\right\Vert _{2}}{\left\Vert p^{(i)} - q\right\Vert _{2}}\right)\right)\right]^{m}$. Using the fact that $1-(2 / \pi)\cos^{-1}\left(x\right)\leq x \; \forall x \in [0,1]$, this quantity is at least $\left[1 - \frac{1}{n_{0} - k}\sum_{i=2k+1}^{n}\frac{\left\Vert p^{(k)}-q\right\Vert _{2}}{\left\Vert p^{(i)}-q\right\Vert _{2}}\right]^{m}$. 
\end{proof}

\begin{lem}
\label{lem:prob_num_extraneous_points_sparsity}
On a dataset with global relative sparsity $(k,\gamma)$, the probability that for all constituent simple indices of a composite index, fewer than $n_{0}$ points exist that are not the true $k$-nearest neighbours but are ranked before some of them, is at least $\left[1 - \frac{1}{n_{0} - k}O\left(\max(k\log (n/k),k(n/k)^{1-\log_{2}\gamma})\right)\right]^{m}$. 
\end{lem}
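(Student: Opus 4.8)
The plan is to reduce the statement to a purely deterministic estimate on a sum of distance ratios, and then to exploit the global relative sparsity to bound that sum. By Lemma~\ref{lem:prob_num_extraneous_points}, the probability in question is already known to be at least $\left[1 - \frac{1}{n_{0} - k}\sum_{i=2k+1}^{n}\frac{\|p^{(k)} - q\|_{2}}{\|p^{(i)} - q\|_{2}}\right]^{m}$. Hence it suffices to show that, on a dataset with global relative sparsity $(k,\gamma)$, the sum $\sum_{i=2k+1}^{n}\frac{\|p^{(k)} - q\|_{2}}{\|p^{(i)} - q\|_{2}}$ is $O\!\left(\max(k\log(n/k), k(n/k)^{1-\log_{2}\gamma})\right)$; substituting this into the expression above immediately yields the claim. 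Write $r_{i} := \|p^{(i)} - q\|_{2}$, so that $r_{1} \le r_{2} \le \cdots \le r_{n}$ and the target sum is $\sum_{i=2k+1}^{n} r_{k}/r_{i}$.

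First I would use the sparsity hypothesis to control how fast $r_{i}$ grows with $i$. Since the $k$ closest points lie within $B_{q}(r_{k})$, we have $|B_{q}(r_{k})| \ge k$, so the defining inequality of global relative sparsity applies at radius $r_{k}$; iterating it at the radii $r_{k}, \gamma r_{k}, \gamma^{2} r_{k}, \ldots$ gives $|B_{q}(\gamma^{j} r_{k})| \le 2^{j} k$ for every $j \ge 0$. Equivalently, any point of rank exceeding $2^{j} k$ must lie strictly outside $B_{q}(\gamma^{j} r_{k})$, so that $r_{i} > \gamma^{j} r_{k}$ whenever $i > 2^{j} k$, i.e. $r_{k}/r_{i} < \gamma^{-j}$.

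Next I would partition the index range $\{2k+1, \ldots, n\}$ into the geometric blocks $(2^{j} k, 2^{j+1} k]$ for $j = 1, \ldots, J$ with $J = \lceil \log_{2}(n/k)\rceil$. Block $j$ contains at most $2^{j} k$ indices, and by the previous step each contributes less than $\gamma^{-j}$, so the total contribution of block $j$ is at most $2^{j} k \cdot \gamma^{-j} = k(2/\gamma)^{j}$. Summing over blocks bounds the target sum by $\sum_{j=1}^{J} k(2/\gamma)^{j}$. The final step is to evaluate this geometric sum in the two regimes that produce the two arguments of the maximum: when $\gamma \ge 2$ the ratio $2/\gamma \le 1$ and the sum is $O(kJ) = O(k\log(n/k))$, whereas when $\gamma < 2$ the ratio exceeds $1$ and the sum is dominated by its last term, $O\!\left(k(2/\gamma)^{J}\right) = O\!\left(k(n/k)^{1-\log_{2}\gamma}\right)$, using the identity $(2/\gamma)^{\log_{2}(n/k)} = (n/k)^{1-\log_{2}\gamma}$. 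Taking the maximum over the two cases gives the desired bound on the sum.

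The main obstacle I anticipate is the careful evaluation of the geometric sum uniformly across all $\gamma \ge 1$: the bound $O\!\left(k(2/\gamma)^{J}\right)$ degrades precisely as $\gamma \to 2$, since the common ratio approaches $1$ and the closed form of the geometric series inflates the constant, which is exactly why the logarithmic term must appear as a separate argument of the maximum rather than being absorbed into the power term. A secondary technical point is the treatment of ties in the distances $r_{i}$; if several points share a distance then $|B_{q}(r_{k})| = k$ need not hold, and I would either assume the points are in general position or absorb the resulting constant-factor slack into the $O(\cdot)$ by observing that the doubling argument still yields $|B_{q}(\gamma^{j} r_{k})| = O(2^{j} k)$.
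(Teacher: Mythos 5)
Your proposal is correct and takes essentially the same route as the paper's own proof: reduce via Lemma~\ref{lem:prob_num_extraneous_points} to bounding $\sum_{i=2k+1}^{n}\left\Vert p^{(k)}-q\right\Vert _{2}/\left\Vert p^{(i)}-q\right\Vert _{2}$, use the iterated sparsity bound to dominate it by the geometric sum $\sum_{j}k(2/\gamma)^{j}$ over dyadic rank blocks, and split into the regimes $\gamma\geq2$ and $1\leq\gamma<2$ to obtain the two arguments of the maximum. Your explicit remark about ties (where $\left|B_{q}(r_{k})\right|$ may exceed $k$, so the iteration needs general position or a constant-factor slack) addresses a point the paper passes over silently, but it does not alter the argument.
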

\begin{proof}
By definition of global relative sparsity, for all $i\geq2k+1$, $\left\Vert p^{(i)}-q\right\Vert _{2}>\gamma\left\Vert p^{(k)}-q\right\Vert _{2}$. By applying this recursively, we see that for all $i\geq2^{i'}k+1$, $\left\Vert p^{(i)}-q\right\Vert _{2}>\gamma^{i'}\left\Vert p^{(k)}-q\right\Vert _{2}$. It follows that $\sum_{i=2k+1}^{n}\frac{\left\Vert p^{(k)}-q\right\Vert _{2}}{\left\Vert p^{(i)}-q\right\Vert _{2}}$ is less than $\sum_{i'=1}^{\lceil\log_{2}(n/k)\rceil-1}2^{i'}k\gamma^{-i'}$. If $\gamma \geq 2$, this quantity is at most $k\log_{2}\left(\frac{n}{k}\right)$. If $1 \leq\gamma < 2$, this quantity is:

\vspace{-10pt}
\footnotesize
\begin{align*}
 & k\left(\frac{2}{\gamma}\right)\left(\left(\frac{2}{\gamma}\right)^{\lceil\log_{2}(n/k)\rceil-1}-1\right)/\left(\frac{2}{\gamma}-1\right) \\
 = \;& O\left(k\left(\frac{2}{\gamma}\right)^{\lceil\log_{2}(n/k)\rceil-1}\right) \\
 = \;& O\left(k\left(\frac{n}{k}\right)^{1-\log_{2}\gamma}\right)
\end{align*}
\normalsize

Combining this bound with Lemma \ref{lem:prob_num_extraneous_points} yields the desired result. 

\end{proof}

\begin{customlem}{6}
For a dataset with global relative sparsity $(k,\gamma)$, there is some $\tilde{k} \in \Omega(\max(k\log (n/k),k(n/k)^{1-\log_{2}\gamma}))$ such that the probability that the candidate points retrieved from a given composite index do not include some of the true $k$-nearest neighbours is at most some constant $\alpha < 1$.
\end{customlem}

\begin{proof}
We will refer to points ranked in the top $\tilde{k}$ positions that are the true $k$-nearest neighbours as \emph{true positives} and those that are not as \emph{false positives}. Additionally, we will refer to points not ranked in the top $\tilde{k}$ positions that are the true $k$-nearest neighbours as \emph{false negatives}. 

When not all the true $k$-nearest neighbours are in the top $\tilde{k}$ positions, then there must be at least one false negative. Since there are at most $k-1$ true positives, there must be at least $\tilde{k}-(k-1)$ false positives. 

Since false positives are not the true $k$-nearest neighbours but are ranked before the false negative, which is a true $k$-nearest neighbour, we can apply Lemma \ref{lem:prob_num_extraneous_points_sparsity}. By taking $n_{0}$ to be $\tilde{k}-(k-1)$, we obtain a lower bound on the probability of the existence of fewer than $\tilde{k}-(k-1)$ false positives for all constituent simple indices of the composite index, which is $\left[1 - \frac{1}{\tilde{k}-2k+1}O\left(\max(k\log (n/k),k(n/k)^{1-\log_{2}\gamma})\right)\right]^{m}$. If each simple index has fewer than $\tilde{k}-(k-1)$ false positives, then the top $\tilde{k}$ positions must contain all the true $k$-nearest neighbours. Since this is true for all constituent simple indices, all the true $k$-nearest neighbours must be among the candidate points after $\tilde{k}$ iterations of the outer loop. The failure probability is therefore at most $1 - \left[1 - \frac{1}{\tilde{k}-2k+1}O\left(\max(k\log (n/k),k(n/k)^{1-\log_{2}\gamma})\right)\right]^{m}$. So, there is some $\tilde{k} \in \Omega(\max(k\log (n/k),k(n/k)^{1-\log_{2}\gamma}))$ that makes this quantity strictly less than 1. 

\end{proof}

\begin{customthm}{7}
For a dataset with global relative sparsity $(k,\gamma)$, for any $\epsilon > 0$, there is some $L$ and $\tilde{k} \in \Omega(\max(k\log (n/k),k(n/k)^{1-\log_{2}\gamma}))$ such that the algorithm returns the correct set of $k$-nearest neighbours with probability of at least $1 - \epsilon$. 
\end{customthm}

\begin{proof}
By Lemma \ref{lem:single_proj_failure_prob}, the first $\tilde{k}$ points retrieved from a given composite index do not include some of the true $k$-nearest neighbours with probability of at most $\alpha$. For the algorithm to fail, this must occur for all composite indices. Since each composite index is constructed independently, the algorithm fails with probability of at most $\alpha^{L}$, and so must succeed with probability of at least $1-\alpha^{L}$. Since $\alpha < 1$, there is some $L$ that makes $1-\alpha^{L} \geq 1 - \epsilon$. 
\end{proof}

\begin{customthm}{8}
The algorithm takes $O(\max(dk\log (n/k),dk(n/k)^{1-1 / d'}))$ time to retrieve the $k$-nearest neighbours at query time, where $d'$ denotes the intrinsic dimension of the dataset. 
\end{customthm}

\begin{proof}
Computing projections of the query point along all $u_{jl}$'s takes $O(d)$ time, since $m$ and $L$ are constants. Searching in the binary search trees/skip lists $T_{jl}$'s takes $O(\log n)$ time. The total number of candidate points retrieved is at most $\Theta(\max(k\log (n/k),k(n/k)^{1-\log_{2}\gamma}))$. Computing the distance between each candidate point and the query point takes at most $O(\max(dk\log (n/k),dk(n/k)^{1-\log_{2}\gamma}))$ time. We can find the $k$ closest points to $q$ in the set of candidate points using a selection algorithm like quickselect, which takes $O(\max(k\log (n/k),k(n/k)^{1-\log_{2}\gamma}))$ time on average. Since the time taken to compute distances to the query point dominates, the entire algorithm takes $O(\max(dk\log (n/k),dk(n/k)^{1-\log_{2}\gamma}))$ time. Since $d' = 1/\log_{2}\gamma$, this can be rewritten as $O(\max(dk\log (n/k),dk(n/k)^{1-1 / d'}))$. 
\end{proof}

\begin{customthm}{9}
The algorithm takes $O(dn+n\log n)$ time to preprocess the data points in $D$ at construction time. 
\end{customthm}

\begin{proof}
Computing projections of all $n$ points along all $u_{jl}$'s takes $O(dn)$ time, since $m$ and $L$ are constants. Inserting all $n$ points into $mL$ self-balancing binary search trees/skip lists takes $O(n\log n)$ time. 
\end{proof}

\begin{customthm}{10}
The algorithm requires $O(d+\log n)$ time to insert a new data point and $O(\log n)$ time to delete a data point. 
\end{customthm}

\begin{proof}
In order to insert a data point, we need to compute its projection along all $u_{jl}$'s and insert it into each binary search tree or skip list. Computing the projection takes $O(d)$ time and inserting an entry into a self-balancing binary search tree or skip list takes $O(\log n)$ time. In order to delete a data point, we simply remove it from each of the binary search trees or skip lists, which takes $O(\log n)$ time. 
\end{proof}

\begin{customthm}{11}
The algorithm requires $O(n)$ space in addition to the space used to store the data. 
\end{customthm}

\begin{proof}
The only additional information that needs to be stored are the $mL$ binary search trees or skip lists. Since $n$ entries are stored in each binary search tree/skip list, the additional space required is $O(n)$. 
\end{proof}

\begin{customthm}{12}
For any $\epsilon > 0$, $m$ and $L$, the data-dependent algorithm returns the correct set of $k$-nearest neighbours of the query $q$ with probability of at least $1 - \epsilon$. 
\end{customthm}

\begin{proof}
We analyze the probability that the algorithm fails to return the correct set of $k$-nearest neighbours. Let $p^{*}$ denote a true $k$-nearest neighbour that was missed. If the algorithm fails, then for any given composite index, $p^{*}$ is not among the candidate points retrieved from the said index. In other words, the composite index must have returned all these points before $p^{*}$, implying that at least one constituent simple index returns all these points before $p^{*}$. This means that all these points must appear closer to $q$ than $p^{*}$ under the projection associated with the simple index. By Lemma \ref{lem:single_order_under_proj}, if we take $\left\{ v^{l}_{i}\right\} _{i=1}^{N}$ to be displacement vectors from $q$ to the candidate points that are farther from $q$ than $p^{*}$ and $v^{s}$ to be the displacement vector from $q$ to $p^{*}$, the probability of this occurring for a given constituent simple index of the $l^{\mathrm{th}}$ composite index is at most $1-\frac{2}{\pi}\cos^{-1}\left(\left\Vert p^{*}-q\right\Vert _{2}/\left\Vert \tilde{p}_{l}^{\mathrm{max}}-q\right\Vert _{2}\right)$. The probability that this occurs for \emph{some} constituent simple index is at most $1-\left(\frac{2}{\pi}\cos^{-1}\left(\left\Vert p^{*}-q\right\Vert _{2}/\left\Vert \tilde{p}_{l}^{\mathrm{max}}-q\right\Vert _{2}\right)\right)^{m}$. For the algorithm to fail, this must occur for all composite indices; so the failure probability is at most $\prod_{l=1}^{L}\left(1-\left(\frac{2}{\pi}\cos^{-1}\left(\left\Vert p^{*}-q\right\Vert _{2}/\left\Vert \tilde{p}_{l}^{\mathrm{max}}-q\right\Vert _{2}\right)\right)^{m}\right)$. 

We observe that $\left\Vert p^{*} \vphantom{\tilde{p}^{(k)}} -q\right\Vert _{2} \leq \left\Vert p^{(k)}-q\right\Vert _{2} \leq \left\Vert \tilde{p}^{(k)}-q\right\Vert _{2}$ since there can be at most $k-1$ points in the dataset that are closer to $q$ than $p^{*}$. So, the failure probability can be bounded above by $\prod_{l=1}^{L}\left(1-\left(\frac{2}{\pi}\cos^{-1}\left(\left\Vert \tilde{p}^{(k)}-q\right\Vert _{2}/\left\Vert \tilde{p}_{l}^{\mathrm{max}} \vphantom{\tilde{p}^{(k)}} -q\right\Vert _{2}\right)\right)^{m}\right)$. When the algorithm terminates, we know this quantity is at most $\epsilon$. Therefore, the algorithm returns the correct set of $k$-nearest neighbours with probability of at least $1 - \epsilon$. 
\end{proof}

\begin{customthm}{13}
On a dataset with global relative sparsity $(k,\gamma)$, given fixed parameters $m$ and $L$, the data-dependent algorithm takes \small$O\left(\max\left(dk\log\left(\frac{n}{k}\right),dk\left(\frac{n}{k}\right)^{1-1/d'},\frac{d}{\left(1-\sqrt[m]{1-\sqrt[L]{\epsilon}}\right)^{d'}}\right)\right)$\normalsize \; time with high probability to retrieve the $k$-nearest neighbours at query time, where $d'$ denotes the intrinsic dimension of the dataset. 
\end{customthm}

\begin{proof}
In order to bound the running time, we bound the total number of candidate points retrieved until the stopping condition is satisfied. We divide the execution of the algorithm into two stages and analyze the algorithm's behaviour before and after it finishes retrieving all the true $k$-nearest neighbours. We first bound the number of candidate points the algorithm retrieves before finding the complete set of $k$-nearest neighbours. By Lemma \ref{lem:prob_num_extraneous_points_sparsity}, the probability that there exist fewer than $n_{0}$ points that are not the $k$-nearest neighbours but are ranked before some of them in all constituent simple indices of any given composite index is at least $\left[1 - \frac{1}{n_{0} - k}O\left(\max(k\log (n/k),k(n/k)^{1-\log_{2}\gamma})\right)\right]^{m}$. We can choose some $n_{0} \in \Theta\left(\max(k\log (n/k),k(n/k)^{1-\log_{2}\gamma})\right)$ that makes this probability arbitrarily close to 1. So, there are $\Theta\left(\max(k\log (n/k),k(n/k)^{1-\log_{2}\gamma})\right)$ such points in each constituent simple index with high probability, implying that the algorithm retrieves at most $\Theta\left(\max(k\log (n/k),k(n/k)^{1-\log_{2}\gamma})\right)$ extraneous points from any given composite index before finishing fetching all the true $k$-nearest neighbours. Since the number of composite indices is constant, the total number of candidate points retrieved from all composite indices during this stage is $k + \Theta\left(\max(k\log (n/k),k(n/k)^{1-\log_{2}\gamma})\right) = \Theta\left(\max(k\log (n/k),k(n/k)^{1-\log_{2}\gamma})\right)$ with high probability. 

After retrieving all the $k$-nearest neighbours, if the stopping condition has not yet been satisfied, the algorithm would continue retrieving points. We analyze the number of additional points the algorithm retrieves before it terminates. To this end, we bound the ratio $\left\Vert \tilde{p}^{(k)}-q\right\Vert _{2}/\left\Vert \tilde{p}_{l}^{\mathrm{max}} \vphantom{\tilde{p}^{(k)}} - q\right\Vert _{2}$ in terms of the number of candidate points retrieved so far. Since all the true $k$-nearest neighbours have been retrieved, $\left\Vert \tilde{p}^{(k)}-q\right\Vert _{2} = \left\Vert p^{(k)}-q\right\Vert _{2}$. Suppose the algorithm has already retrieved $n' - 1$ candidate points and is about to retrieve a new candidate point. Since this new candidate point must be different from any of the existing candidate points, $\left\Vert \tilde{p}_{l}^{\mathrm{max}} \vphantom{p^{(n')}} -q\right\Vert _{2} \geq \left\Vert p^{(n')}-q\right\Vert _{2}$. Hence, $\left\Vert \tilde{p}^{(k)}-q\right\Vert _{2}/\left\Vert \tilde{p}_{l}^{\mathrm{max}} \vphantom{\tilde{p}^{(k)}} - q\right\Vert _{2} \leq  \left\Vert p^{(k)} \vphantom{p^{(n')}} -q\right\Vert _{2} / \left\Vert p^{(n')}-q\right\Vert _{2}$. 

By definition of global relative sparsity, for all $n' \geq 2^{i'}k+1$, $\left\Vert p^{(n')}-q\right\Vert _{2}>\gamma^{i'}\left\Vert p^{(k)} \vphantom{p^{(n')}} -q\right\Vert _{2}$. It follows that $\left\Vert p^{(k)} \vphantom{p^{(n')}} -q\right\Vert _{2} / \left\Vert p^{(n')}-q\right\Vert _{2} < \gamma^{-\lfloor\log_{2}((n'-1)/k)\rfloor}$ for all $n'$. By combining the above inequalities, we find an upper bound on the test statistic:
\footnotesize
\begin{align*}
 &\prod_{l=1}^{L}\left(1-\left(\frac{2}{\pi}\cos^{-1}\left( \frac{\left\Vert \tilde{p}^{(k)}-q\right\Vert _{2}}{\left\Vert \tilde{p}_{l}^{\mathrm{max}} \vphantom{\tilde{p}^{(k)}} -q\right\Vert _{2}} \right)\right)^{m}\right) \\
 \leq \;& \prod_{l=1}^{L}\left(1-\left(1- \frac{\left\Vert \tilde{p}^{(k)}-q\right\Vert _{2}}{\left\Vert \tilde{p}_{l}^{\mathrm{max}} \vphantom{\tilde{p}^{(k)}} -q\right\Vert _{2}} \right)^{m}\right) \\
 < \;& \left[1-\left(1-\gamma^{-\lfloor\log_{2}((n'-1)/k)\rfloor}\right)^{m}\right]^{L} \\
 < \;& \left[1-\left(1-\gamma^{-\log_{2}((n'-1)/k)+1}\right)^{m}\right]^{L}
\end{align*}
\normalsize

Hence, if $\left[1-\left(1-\gamma^{-\log_{2}((n'-1)/k)+1}\right)^{m}\right]^{L} \leq \epsilon$, then $\prod_{l=1}^{L}\left(1-\left(\frac{2}{\pi}\cos^{-1}\left( \left\Vert \tilde{p}^{(k)}-q\right\Vert _{2}/\left\Vert \tilde{p}_{l}^{\mathrm{max}} \vphantom{\tilde{p}^{(k)}} - q\right\Vert _{2} \right)\right)^{m}\right) < \epsilon$. So, for some $n'$ that makes the former inequality true, the stopping condition would be satisfied and so the algorithm must have terminated by this point, if not earlier. By rearranging the former inequality, we find that in order for it to hold, $n'$ must be at least $2/\left(1-\sqrt[m]{1-\sqrt[L]{\epsilon}}\right)^{1/\log_{2}\gamma}$. Therefore, the number of points the algorithm retrieves before terminating cannot exceed $2/\left(1-\sqrt[m]{1-\sqrt[L]{\epsilon}}\right)^{1/\log_{2}\gamma}$. 

Combining the analysis for both stages, the number of points retrieved is at most
\footnotesize
\begin{align*}
O\left(\max\left(k\log\left(\frac{n}{k}\right),k\left(\frac{n}{k}\right)^{1-\log_{2}\gamma},\frac{1}{\left(1-\sqrt[m]{1-\sqrt[L]{\epsilon}}\right)^{\frac{1}{\log_{2}\gamma}}}\right)\right)
\end{align*}
\normalsize
with high probability. 

Since the time taken to compute distances between the query point and candidate points dominates, the running time is
\footnotesize
\begin{align*}
O\left(\max\left(dk\log\left(\frac{n}{k}\right),dk\left(\frac{n}{k}\right)^{1-\log_{2}\gamma},\frac{d}{\left(1-\sqrt[m]{1-\sqrt[L]{\epsilon}}\right)^{\frac{1}{\log_{2}\gamma}}}\right)\right)
\end{align*}
\normalsize
with high probability. 

Applying the definition of intrinsic dimension yields the desired result. 

\end{proof}

\end{document}